\DeclareMathOperator*{\argmax}{arg\,max}
\DeclareMathOperator*{\argmin}{arg\,min}
\theoremstyle{plain} 
\newtheorem{lemme}{Lemma}
\newtheorem{corollaire}[lemme]{Corollary}
\theoremstyle{plain} 
\newtheorem{definition}[lemme]{Definition}
\newtheorem{notations}[lemme]{Notation}
\newtheorem{proposition}[lemme]{Proposition}
\newtheorem*{proposition sans num}{Proposition}
\newtheorem{remarque}[lemme]{Remark}
\newtheorem{theorem}[lemme]{Theorem}
\newcommand{\bC}{\mathbbm{C}}
\newcommand{\bF}{\mathbbm{F}}
\newcommand{\bN}{\mathbbm{N}}
\newcommand{\cC}{\mathcal{C}}
\newcommand{\cE}{\mathcal{E}}
\newcommand{\cG}{\mathcal{G}}
\newcommand{\cP}{\mathcal{P}}
\newcommand{\cT}{\mathcal{T}}
\newcommand{\cF}{\mathcal{F}}
\newcommand{\cX}{\mathcal{X}}
\title{Efficient decoding of random errors for quantum expander codes}
\date{}
\author{Omar Fawzi\thanks{ENS de Lyon, UMR 5668 LIP - CNRS - UCBL - INRIA - Universit\'e de Lyon, 69364 Lyon, France. omar.fawzi@ens-lyon.fr} \qquad Antoine Grospellier\thanks{Inria, SECRET Project, 2 Rue Simone Iff, 75012 Paris Cedex, France. antoine.grospellier@inria.fr, anthony.leverrier@inria.fr} \qquad Anthony Leverrier\footnotemark[2]}
\begin{document}

\maketitle

\begin{abstract}
  We show that quantum expander codes, a constant-rate family of quantum LDPC codes, with the quasi-linear time decoding algorithm of Leverrier, Tillich and Z\'emor can correct a constant fraction of random errors with very high probability. This is the first construction of a constant-rate quantum LDPC code with an efficient decoding algorithm that can correct a linear number of random errors with a negligible failure probability. Finding codes with these properties is also motivated by Gottesman's construction of fault tolerant schemes with constant space overhead.
  
  In order to obtain this result, we study a notion of $\alpha$-percolation: for a random subset $E$ of vertices of a given graph, we consider the size of the largest connected $\alpha$-subset of $E$, where $X$ is an $\alpha$-subset of $E$ if $|X \cap E| \geq \alpha |X|$.
\end{abstract}

\section{Introduction}

A major goal of quantum information research is to build a large quantum computer that would exploit the laws of quantum mechanics to solve problems out of reach for classical computers. 
Perhaps the main difficulty of this program lies with the fragility of quantum information. While classical information can easily be stored and manipulated without error, this is not the case of quantum information which will tend to decohere quickly as soon as it interacts with the environment. Finding the right balance between protecting the qubits from decoherence and allowing them to be accessed in order to perform a computation turned out to be an extremely challenging task. 

A natural solution to protect quantum information is to encode the logical qubits into a larger set of physical qubits thanks to quantum error correction techniques.
Classically, good error correcting codes that feature a constant encoding rate, a minimal distance linear in their length and efficient decoding algorithms exist; this is for instance the case of low-density parity-check (LDPC) codes \cite{richardson2008modern}.
The situation for quantum codes is direr. Here, the LDPC condition is particularly appealing for implementation, but finding good quantum LDPC codes has proven difficult. The canonical quantum LDPC code is the toric code due to Kitaev \cite{kitaev2003fault}, which is indeed local since only involving 4-body correlations and can be decoded efficiently thanks to Edmonds algorithm for minimum weight perfect matching \cite{dennis2002topological} (or more recently \cite{DN17}). Unfortunately, the toric code as well as its surface code generalizations display a zero rate, in the sense that the ratio between encoded and physical qubits tends to zero when the code size tends to infinity.
The minimum distance of the toric code, on the other hand, is quite good and grows as $\Theta(n^{1/2})$ for a code of length $n$. This is essentially the best available value for LDPC codes since the current record is $\Omega(n^{1/2} \log^{1/4}n)$ due to Freedman, Meyer and Luo \cite{freedman2002z2} and it is a fascinating theoretical open question to construct quantum LDPC codes with minimum distance $\Omega(n^{1/2+\epsilon})$ for some constant $\epsilon >0$. However, even if for the toric code and its generalizations there exist errors of weight $O(\sqrt{n})$ that cannot be corrected, it turns out that there are efficient decoding algorithms that can, with high probability, correct \emph{random errors} on a linear number of qubits provided the error rate is below some threshold~\cite{dennis2002topological}. 

In this paper, we ask the following natural question: are there quantum LDPC codes with efficient decoding algorithms that display constant rate and can correct a linear number of random errors with very high probability?
This question is also motivated by reducing the space overhead for fault-tolerant quantum computation. One of the crowning theoretical achievements in the field of quantum computation is the \emph{threshold theorem} \cite{aharonov1997fault} which shows that an arbitrarily long computation can be performed with an imperfect implementation (noisy qubits, imperfect gates), provided that the noise level is below some constant threshold. 
This is achieved thanks to \emph{quantum fault-tolerance} techniques, by replacing the target logical circuit using $m$ qubits and containing $T$ gates by a fault-tolerant circuit using $O(m \mathrm{polylog}(mT))$ qubits. In a recent breakthrough paper \cite{gottesman2013fault}, Gottesman has argued that the polylogarithmic space overhead isn't necessary and is in fact a consequence of using zero-rate codes such as the toric code in the fault-tolerant circuit. In contrast, one can achieve \emph{a constant space overhead} by relying on constant-rate quantum LDPC codes satisfying a number of desirable properties. 

The most important such property is that it should be able to efficiently correct with very high probability\footnote{We use the term very high probability to refer to a probability $1 - \epsilon(n)$ where $\epsilon(n)$ is negligible, i.e., $n^c \epsilon(n) \to 0$ as $n \to \infty$ for any constant $c$.} random errors provided the error rate is below some threshold.\footnote{As is standard in the fault tolerance literature, we use the word ``threshold'' to mean that below this value, errors are corrected with high probability. And we do not care about what happens above this value.
} 
A main contender for such a family of codes is obtained from the hypergraph-product construction of Tillich and Z\'emor \cite{tillich2014quantum}. Starting with good classical LDPC codes, the construction yields quantum LDPC codes with constant rate and minimum distance growing like $n^{1/2}$. The reason they cannot directly be plugged in Gottesman's construction is the absence of an efficient decoding algorithm to correct a linear number of random errors. Other code constructions such as hyperbolic codes in 4 dimensions are also a candidate since they combine a constant rate with a polynomial minimum distance \cite{guth2014quantum}, but they lack an efficient decoding algorithm providing the required error suppression, see however \cite{hastings2013decoding,londe2017golden}. Finally, hyperbolic surface codes can also achieve constant rate but only with a logarithmic minimum distance \cite{freedman2002z2,kim2007quantum,zemor2009cayley}. While such codes come with an efficient decoding algorithm, for instance Edmonds' maximum matching algorithm \cite{edmonds1965maximum}, the success probability will not be very high and will scale like $1 - 1/\mathrm{poly}(n)$ because the minimum distance is only logarithmic. As a result, they cannot be used to provide quantum fault tolerance with constant overhead and error thresholds independent of how the computation size scales with the number of qubits.

Our main result is to show that the family of \emph{quantum expander codes} of Ref.~\cite{leverrier2015quantum} combine the following desirable properties: they are constant rate, with a quasi-linear time decoding algorithm that corrects adversarial errors of weight $O(\sqrt{n})$ and corrects random errors of linear size with very high probability provided that the error rate is below some constant value.

\subsection*{Main result and proof techniques}

In this paper, we consider quantum expander codes, a family of constant-rate quantum hypergraph-product codes~\cite{tillich2014quantum}, and show that the efficient decoding algorithm introduced in Ref.~\cite{leverrier2015quantum} can correct a constant fraction of random errors. More precisely, we prove the existence of a threshold for various models of noise, including depolarizing noise, as well as more adversarial models which are relevant for quantum fault-tolerance. 
The decoding algorithm is inspired by the bit-flip decoding algorithm studied by Sipser and Spielman in the context of classical expander codes \cite{sipser1996expander}, and works by finding error patterns of small size that decrease the weight of the syndrome. For this reason, we'll refer to it as ``small-set-flip'' decoding algorithm in the sequel.
Similarly to the classical case, it runs in quasi-linear time.

Our approach for the analysis of the decoding algorithm is inspired by the previous work of Kovalev and Pryadko \cite{kovalev2013fault} who studied the behaviour of the maximum likelihood decoding algorithm (that has exponential running time in general) applied to hypergraph-product codes and established the existence of a threshold for the random independent error model. Here, we use similar techniques to the study of a sub-optimal but efficient decoding algorithm, and also find the existence of a threshold.

Technically, we represent the set of qubits as a graph $\cG=(\{1, \dots, n\},\cE)$ called \emph{adjacency graph} where the vertices correspond to the qubits of the code and two qubits are linked by an edge if there is a stabilizer generator that acts on the two qubits. The approach is then to show that provided the vertices $E$ corresponding to the error do not form large \emph{connected} subsets, the error can be corrected by the decoding algorithm. How large the connected subsets are allowed to be is related to the minimum distance of the code for the maximum-likelihood decoder, or to the maximum size of correctable errors for more general decoders. This naturally leads to studying the size of the largest connected subset of a randomly chosen set of vertices of a graph. This is also called site percolation on finite graphs and its connection with the performance of decoding algorithms is old and was used for example in~\cite{dennis2002topological}, \cite{delfosse2010quantum}, \cite{delfosse2013upper} for the surface codes and in \cite{kovalev2013fault} for the maximum likelihood decoder of hypergraph product codes.

Here, in order to analyse the efficient small-set-flip decoding algorithm for expander graphs, we identify a slightly more complex notion of connectivity as relevant. Namely, instead of studying the size of the largest connected subset of $E$, we study the size of the largest connected $\alpha$-subset of $E$. We say that $X$ is an \emph{$\alpha$-subset} of $E$ if $|X \cap E| \geq \alpha |X|$. Note that for $\alpha = 1$, this is the same as $X$ is a subset of $E$. Our main technical contribution is to show that for some relevant noise models, if the probability of error of each qubit is below some threshold depending on $\alpha$ and the degree of $\cG$, then the probability that a random set $E$ has a connected $\alpha$-subset of size $\omega(\log n)$ vanishes as $n$ grows. As the minimum distance of expander codes is $\Omega(\sqrt{n})$ and the efficient decoding algorithm can correctly decode this many errors, we obtain the claimed result.

In fact, our analysis is not restricted to quantum expander codes and applies to any ``local'' decoding algorithm (see Section \ref{subsec:local} for a definition). We show that a ``local'' decoding algorithm with parameter $\alpha$ can correct errors on a subset of qubits $E$ if the size of the largest connected $\alpha$-subset of $E$ is of size less than $t$, where $t$ is such that the decoding algorithm can correct all errors of weight at most $t$. A ``local'' decoding algorithm here refers to a decoding algorithm where in each step errors on distant qubits are decoded independently, and is satisfied by the small-set-flip decoding algorithm for quantum expander codes. Similarly, the bit-flip decoding algorithm for classical expander codes as well as the maximum likelihood decoders for classical LDPC codes or quantum CSS LDPC codes are local in this sense. 
Our main theorem is the following:

  \begin{theorem}\label{main theorem}
    Consider a quantum expander code with sufficient expansion and the small-set-flip decoding algorithm (\Cref{algo decodage qec0}). Then there exists a probability $p_0 > 0$ and constants $C,C'$ such that if the noise parameter satisfies $p < p_0$, the small-set-flip decoding algorithm corrects a random error with probability at least $1 - C n\left(\frac{p}{p_0}\right)^{C' \sqrt{n}}$.
  \end{theorem}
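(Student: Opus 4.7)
The plan is to reduce Theorem~\ref{main theorem} to a site-percolation estimate on the bounded-degree adjacency graph $\cG$ of the code, and then prove that estimate by a Peierls-type union bound. First I would invoke the ``locality'' property of small-set-flip sketched in the introduction: for a parameter $\alpha \in (0,1)$ determined by the expansion of $\cG$, decoding succeeds on any error set $E \subseteq V(\cG)$ provided that the largest connected $\alpha$-subset of $E$ has size strictly less than $t$, where $t = \Omega(\sqrt{n})$ is the adversarial correction capacity of small-set-flip on quantum expander codes established in \cite{leverrier2015quantum}. This turns the probabilistic statement into a graph-theoretic tail bound: for an i.i.d.\ $p$-random $E$, the decoding-failure event is contained in the event that some connected $X \subseteq V(\cG)$ with $|X| \geq t$ satisfies $|X \cap E| \geq \alpha |X|$.

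To bound the probability of this event I would use a standard enumeration argument. If $\cG$ has bounded degree $d$, the number of connected subsets of size $s$ containing any given vertex is at most $(ed)^{s}$, so there are at most $n(ed)^{s}$ connected subsets of size exactly $s$. For any fixed such $X$, the random variable $|X \cap E|$ is Binomial$(s,p)$, giving
\[
\Pr\bigl[\,|X \cap E| \geq \alpha s\,\bigr] \;\leq\; \binom{s}{\lceil \alpha s \rceil}\, p^{\alpha s} \;\leq\; \Bigl(\tfrac{e}{\alpha}\Bigr)^{\alpha s} p^{\alpha s}.
\]
A union bound over all sizes $s' \geq s$ then yields
\[
\Pr\bigl[\,\exists X \text{ connected, } |X| \geq s,\ |X \cap E| \geq \alpha |X|\,\bigr] \;\leq\; n \sum_{s' \geq s} \Bigl[(ed)\bigl(ep/\alpha\bigr)^{\alpha}\Bigr]^{s'},
\]
which is a convergent geometric series as soon as $p < p_0 := \tfrac{\alpha}{e}(ed)^{-1/\alpha}$. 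Substituting $s = t = C' \sqrt{n}$ and bounding the geometric tail produces the claimed $C n (p/p_0)^{C' \sqrt{n}}$.

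The main obstacle is not the percolation counting, which is a standard Peierls-style estimate, but establishing the locality property of small-set-flip with an appropriate parameter $\alpha$. Intuitively, when the connected $\alpha$-subsets of $E$ are all small, the decoding problem should decompose into well-separated local sub-problems to which the adversarial analysis of \cite{leverrier2015quantum} applies. However, small-set-flip is greedy, so one must rule out the possibility that a flip performed inside one cluster of errors inadvertently creates or propagates syndrome inside a far-away cluster and thereby derails subsequent iterations. Making this rigorous demands a careful use of the expansion of the underlying Tanner graph and of the hypergraph-product geometry; this is the code-specific technical core of the argument and where the bulk of the work will lie.
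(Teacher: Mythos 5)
Your overall route is exactly the paper's: reduce success to the statement that every connected $\alpha$-subset of the error has size below the adversarial correction radius $t=\Omega(\sqrt n)$, then kill the event $\mathrm{MaxConn}_\alpha(E)\ge t$ by a Peierls/union bound combining an exponential count of connected sets in the bounded-degree adjacency graph with a binomial-type tail, summing a geometric series to get $Cn(p/p_0)^{\alpha t}$. Your percolation half is sound and differs from \Cref{thm:percolation} only in constants (you count connected sets by $(ed)^s$ instead of the Raney-number bound of \Cref{lemme nb cc approx}, and use $\binom{s}{\lceil\alpha s\rceil}\le(e/\alpha)^{O(\alpha s)}$ instead of the entropy bound); one small caveat is that you only treat i.i.d.\ noise, whereas the paper also proves the bound for the local stochastic model, where $|X\cap E|$ is not binomial and one instead sums $\mathbb{P}[Y\subseteq E]\le p^{|Y|}$ over subsets $Y\subseteq X$ with $|Y|\ge\alpha|X|$ --- an easy modification.

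The piece you defer, the locality reduction, is indeed where the paper's work lies (\Cref{prop:locality}, \Cref{lemma:alpha subset in algo}, \Cref{lemme correction}), but its resolution is cleaner than you anticipate and does not use expansion or hypergraph-product geometry beyond the LDPC structure: every flipped set $F_i$ lies inside the support of a single generator, hence is contained in a single connected component $K$ of the support $U=E\cup F_0\cup\dots\cup F_{f-1}$ of the whole run, and since distinct components of $U$ share no generator, a flip in one component cannot change any syndrome bit seen from another (\Cref{lem:synd local,lemme quantique ok}); this is what rules out the cross-cluster propagation you worry about. Two further ingredients you elide are needed to make the reduction quantitative: (i) the constant $\alpha$ comes from the $\beta$-thresholded variant of the decoder (\Cref{algo decodage qec}), for which each flip decreases the syndrome weight by at least $\beta d_B|F_i|$, giving $|U|\le\frac{1+\beta}{\beta}|E\cap U|$, i.e.\ each component $K$ is an $\alpha$-subset of $E$ with $\alpha=\beta/(1+\beta)$; the original algorithm is then handled by \Cref{rq:algo} (anything the variant corrects, \Cref{algo decodage qec0} corrects); and (ii) one needs the adversarial correction radius for that \emph{variant}, not the LTZ bound for the original algorithm, which is \Cref{thm:beta} and requires slightly stronger expansion ($\delta_A,\delta_B<1/8$ rather than $1/6$) so that $\beta_0>0$. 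With these two points supplied, your plan coincides with the paper's proof.
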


These results are shown in the independent error model, and also in the local stochastic error model that will be defined in Section~\ref{sec:noise} and is intermediate between independent and adversarial errors, and is particularly relevant in the context of fault tolerance.
The proof of~\Cref{main theorem} can be found at the end of Section~\ref{sec:perc}. We point out that the case of quantum CSS LDPC codes with the (inefficient) maximum likelihood decoder was already studied in Ref.~\cite{kovalev2013fault} for the independent error model and in Ref.~\cite{gottesman2013fault} for local stochastic noise.

The family of quantum expander codes with the small-set-flip decoding algorithm thus satisfies the requirements of the construction of~\cite{gottesman2013fault} for a fault-tolerant scheme with constant space overhead. However, to plug these codes in the construction, we need in addition to analyze errors on the syndrome measurements.  We believe that such syndrome errors can be analyzed similarly as in~\cite{gottesman2013fault}, but a full analysis remains to be done.

The manuscript is structured as follows. In Section \ref{sec:codes}, we recall the constructions of classical and quantum expander codes and describe their respective decoding algorithms. In Section \ref{sec:algos}, we consider a generalized version of the small-set-flip decoding algorithm for quantum expander codes and introduce a notion of ``locality'' for decoding algorithms as well as associated parameters $\alpha$ and $t$. We then prove that the algorithm will correct any error $E$ as long as all connected $\alpha$-subsets are of size at most $t$. In Section \ref{sec:perc}, we study the notion of $\alpha$-percolation with the tools of percolation theory and show that for $t = \omega(\log n)$, there exists a threshold below which a random set $E$ will have connected $\alpha$-subsets of size at most $t$ with high probability. Finally, in Section \ref{sec:appl}, we show that the efficient small-set-flip decoding algorithm for quantum expander codes satisfies the locality condition necessary to apply our results.

\section{Classical and Quantum expander codes}
\label{sec:codes}

\subsection{Classical expander codes}
\label{subsec:classical}

A linear classical error correcting code $\cC$ of dimension $k$ and length $n$ is a subspace of $\bF_2^n$ of dimension $k$. Mathematically, it can be defined as the kernel of an $(n-k) \times n$ matrix $H$, called the parity-check matrix of the code: $\cC = \{ x \in \bF_2^n \: : \: Hx = 0\}$. The minimum distance $d_{\min}$ of the code is the minimum Hamming weight of a nonzero codeword: $d_{\min} = \min \{ |x| \: : \: x \in \cC, x\ne 0\}$. Such a linear code is often denoted as $[n,k,d_{\min}]$. 
It is natural to consider families of codes, instead of single instances, and study the dependence between the parameters $n, k$ and $d_{\min}$. In particular, a family of codes has \emph{constant rate} if $k = \Theta(n)$. 
Another property of interest of a linear code is the weight of the rows and columns of the parity-check matrix $H$. If these weights are upper bounded by a constant, then we say that the code is a \emph{low-density parity-check} (LDPC) code~\cite{gallager1962low}. This property is particularly attractive because it allows for efficient decoding algorithms, based on message passing for instance. 

An alternative description of a linear code is via a bipartite graph known as its \emph{factor graph}. Let $G =(V \cup C, \cE)$ be a bipartite graph, with $|V|=n_V$ and $|C|=n_C$. With such a graph, we associate the $n_C \times n_V$ matrix $H$, whose rows are indexed by the vertices of $C$, whose columns are indexed by the vertices of $V$, and such that $H_{cv} = 1$ if $v$ and $c$ are adjacent in $G$ and $H_{cv}=0$ otherwise. The binary linear code $\cC_G$ associated with $G$ is the code with parity-check matrix $H$. The graph $G$ is the \emph{factor graph} of the code $\cC_G$ , $V$ is the set of \emph{bits} and $C$ is the set of \emph{check-nodes}.

It will be convenient to describe codewords and error patterns as subsets of $V$: the binary word $e \in \bF_2^{n_V}$ is described by a subset $E \subseteq V$ whose indicator vector is $e$. Similarly we define the \emph{syndrome} of a binary word either as a binary vector of length $n_C$ or as a subset of $C$:
  \begin{align*}
    \sigma(e) := H e \in \bF_2^{n_C},
    && \sigma(E) := \bigoplus_{v \in E} \Gamma(v) \subseteq C.
  \end{align*}
  In this paper, the operator $\oplus$ is interpreted either as the symmetric difference of sets or as the bit-wise exclusive disjunction depending on whether errors and syndromes are interpreted as sets or as binary vectors.

A family of codes that will be central in this work are those associated to so-called \emph{expander graphs}, that were first considered by Sipser and Spielman in \cite{sipser1996expander}.
\begin{definition}[Expander graph]
  \label{def:exp}
  Let $G = (V \cup C, \cE)$ be a bipartite graph with left and right degrees bounded by $d_V$ and $d_C$ respectively. Let $|V| = n_V$ and $|C| = n_C$. We say that $G$ is $(\gamma_V, \delta_V)$-\emph{left-expanding} for some constants $\gamma_V, \delta_V >0$, if for any subset $S \subseteq V$ with $|S| \leq \gamma_V n_V$, the neighbourhood $\Gamma(S)$ of $S$ in the graph $G$ satisfies $|\Gamma(S)| \geq (1-\delta_V) d_V |S|$. Similarly, we say that $G$ is $(\gamma_C, \delta_C)$-\emph{right-expanding} if for any subset $S \subseteq C$ with $|S| \leq \gamma_C n_C$, we have $|\Gamma(S)| \geq (1-\delta_C) d_C |S|$. Finally, the graph $G$ is said $(\gamma_V, \delta_V, \gamma_C, \delta_C)$-\emph{expanding} if it is both $(\gamma_V, \delta_V)$-left expanding and $(\gamma_C, \delta_C)$-right expanding.  
\end{definition}

Sipser and Spielman introduced \emph{expander codes}, which are the codes associated with (left)-expander graphs. Remarkably these codes come with an efficient decoding algorithm, that can correct \emph{arbitrary} errors of weight $\Omega(n)$ \cite{sipser1996expander}.
\begin{theorem}[Sipser, Spielman \cite{sipser1996expander}]
  \label{thm:SS}
  Let $G = (V \cup C, \cE)$ be a $(\gamma, \delta)$-left expander graph with $\delta < 1/4$. There exists an efficient decoding algorithm for the associated code $\cC_G$ that corrects all error patterns $E \subseteq V$ such that $|E| \leq \gamma (1-2\delta) |V|$.
\end{theorem}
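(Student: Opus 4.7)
The plan is to analyze the bit-flip decoding algorithm for $\cC_G$: initialize the current guess of the error to be empty (equivalently start from the received word) and let $F_t \subseteq V$ denote the discrepancy between the decoder's guess at step $t$ and the true codeword, so $F_0 = E$. At each step, find a bit $v \in V$ whose value is connected to strictly more than $d_V/2$ unsatisfied check-nodes and flip it; terminate when no such bit exists. I will track the single potential $|\sigma(F_t)|$, the number of unsatisfied checks.

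The key ingredient is the \emph{unique-neighbor lemma}: for any non-empty $F \subseteq V$ with $|F| \leq \gamma n_V$, the number of check-nodes adjacent to exactly one vertex of $F$ is at least $(1-2\delta) d_V |F|$. I would prove this by a double-counting argument using left-expansion: $|\Gamma(F)| \geq (1-\delta) d_V |F|$, while the $d_V |F|$ edges leaving $F$ force at most $\delta d_V |F|$ check-nodes in $\Gamma(F)$ to have more than one neighbor in $F$. By averaging over $F$, some vertex $v \in F$ has at least $(1-2\delta) d_V > d_V/2$ unique neighbors (using $\delta < 1/4$); those unique neighbors are necessarily unsatisfied, so the algorithm can always make progress while $F_t \neq \emptyset$ and $|F_t| \leq \gamma n_V$. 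Each such flip changes at most $d_V$ check-values, turning at least $(1-2\delta) d_V$ checks from unsatisfied to satisfied and at most $2\delta d_V$ checks the other way, for a net strict decrease of $|\sigma(F_t)|$ by at least $(1-4\delta) d_V \geq 1$.

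To close the loop I would maintain by induction the invariant $|F_t| \leq \gamma n_V$. Indeed the unique-neighbor lemma gives the two-sided bound $(1-2\delta) d_V |F_t| \leq |\sigma(F_t)| \leq |\sigma(F_0)| \leq d_V |E| \leq d_V \gamma (1-2\delta) n_V$, so $|F_t| \leq \gamma n_V$. The slack built into the hypothesis $|E| \leq \gamma(1-2\delta) n_V$ is exactly what keeps a single flip (which changes $|F_t|$ by $\pm 1$) from pushing the invariant past the threshold $\gamma n_V$; this boundary bookkeeping is the main delicate point of the argument.

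Termination and correctness then fall out. The non-negative integer $|\sigma(F_t)|$ strictly decreases at each step, so the algorithm halts in at most $|\sigma(F_0)| \leq d_V |E| = O(n)$ flips; with a simple data structure maintaining the count of unsatisfied neighbors per bit, each flip costs $O(d_V d_C) = O(1)$, yielding a linear-time decoder. At termination no bit has more than $d_V/2$ unsatisfied neighbors, which combined with the unique-neighbor lemma and the above invariant forces $F = \emptyset$ (otherwise averaging would contradict the stopping condition, since the same argument also shows any non-zero $F$ with $|F| \leq \gamma n_V$ has $\sigma(F) \neq 0$, i.e.\ the code has minimum distance strictly greater than $\gamma n_V$). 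Thus the decoder recovers the transmitted codeword. The hard part I expect is the inductive bookkeeping for $|F_t| \leq \gamma n_V$; once that is settled, the rest is routine.
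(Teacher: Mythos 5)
Your overall route is exactly the standard Sipser--Spielman analysis that the paper simply cites without reproving: the unique-neighbour bound obtained by double counting ($|\Gamma(F)|\geq(1-\delta)d_V|F|$ plus the edge count forces at least $(1-2\delta)d_V|F|$ checks with a single neighbour in $F$), averaging to find a bit with more than $d_V/2$ unsatisfied checks, the syndrome weight as a strictly decreasing potential, and the stopping condition forcing $F=\emptyset$ once $|F|\leq\gamma n_V$ is known. All of that is sound, up to a harmless overstatement: the algorithm is allowed to flip \emph{any} bit with more than $d_V/2$ unsatisfied neighbours, so the guaranteed per-flip decrease is only $2u-d_V\geq 1$, not $(1-4\delta)d_V$; this changes nothing since termination only needs a strict decrease.

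The one genuine soft spot is the step you yourself flag: the inductive invariant $|F_t|\leq\gamma n_V$ is argued circularly. The chain $(1-2\delta)d_V|F_t|\leq|\sigma(F_t)|\leq d_V|E|\leq d_V\gamma(1-2\delta)n_V$ uses the unique-neighbour lemma on $F_t$, which is only available \emph{if} $|F_t|\leq\gamma n_V$ already holds, and its conclusion is again $|F_t|\leq\gamma n_V$ with no strict room to spare; so, contrary to your claim, the slack in $|E|\leq\gamma(1-2\delta)n_V$ does \emph{not} by itself prevent the next flip from pushing $|F_{t+1}|$ above $\gamma n_V$ when $|E|$ sits at the extremal value. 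The standard repair is to argue at the first crossing time $t^*$ with $|F_{t^*}|>\gamma n_V$: the lemma applies to $F_{t^*-1}$, and in addition the syndrome has strictly decreased by at least one per flip over the $t^*-1\geq|F_{t^*-1}|-|E|$ preceding steps, so $(1-2\delta)d_V|F_{t^*-1}|\leq d_V|E|-(|F_{t^*-1}|-|E|)$. This yields $|F_{t^*-1}|\leq|E|\frac{d_V+1}{(1-2\delta)d_V+1}$, which for $n$ large enough is at most $\gamma n_V-1$ even when $|E|=\gamma(1-2\delta)n_V$, contradicting $|F_{t^*}|>\gamma n_V$ (equivalently, one can phrase it as: the corrupt set would have to pass through size $\approx\gamma n_V$, where the unique-neighbour bound would force a syndrome at least as large as the initial one, contradicting strict monotonicity). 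With that patch, your proof is complete and coincides with the argument of the cited reference.
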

The decoding algorithm called the ``bit-flip'' algorithm is very simple: one simply cycles through the bits and flip them if this operation leads to a reduction of the syndrome weight. Sipser and Spielman showed that provided the expansion is sufficient, such an algorithm will always succeed in identifying the error if its weight is below $\gamma(1-2\delta)|V|$. 
In this paper, however,  we will be interested in the decoding of \emph{quantum} expander codes, that we will review next.

Before that, let us mention for completeness that although finding explicit constructions of highly-expanding graphs is a hard problem, such graphs can nevertheless be found efficiently by probabilistic techniques. Verifying that a given graph is expanding is a hard task, however.
\begin{theorem}[Theorem 8.7 of \cite{richardson2008modern}]
  \label{thm:exist}
  Let $\delta_V, \delta_C$ be positive constants. For integers $d_V > 1/\delta_V$ and $d_C > 1/\delta_C$, a graph $G = (V \cup C, \cE)$ with left-degree bounded by $d_V$ and right-degree bounded by $d_C$ chosen at random according to some distribution is $(\gamma_V, \delta_V, \gamma_C, \delta_C)$-expanding for $\gamma_V, \gamma_C = \Omega(1)$ with high probability.
\end{theorem}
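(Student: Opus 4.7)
The plan is to establish the expansion property by a first-moment/probabilistic-method argument using the configuration model, then apply a union bound over all small subsets on each side. Since the theorem is symmetric in the two sides, I will describe the argument for left-expansion in detail and note that right-expansion follows by swapping the roles.

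First I would fix a specific distribution on random bipartite graphs: the \emph{configuration model} with left degrees all equal to $d_V$ and right degrees all equal to $d_C$ (taking $n_V d_V = n_C d_C$). Concretely, attach $d_V$ half-edges to each left vertex and $d_C$ half-edges to each right vertex, then pair them via a uniformly random perfect matching of half-edges. This model is convenient because for any fixed subset $S\subseteq V$ of size $s$ and any fixed subset $T\subseteq C$ of size $\tau$, the probability that all $d_V s$ edges emanating from $S$ are matched into $T$ is at most $\bigl(\tau d_C / (n_C d_C)\bigr)^{d_V s} = (\tau/n_C)^{d_V s}$ (up to lower-order corrections that one can make rigorous by a standard sequential-matching bound).

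Next, for fixed $s$, I would take a union bound over the choice of $S$ and over the choice of a target set $T$ of size $\tau=\lfloor(1-\delta_V)d_V s\rfloor$ which would witness failure of expansion (i.e.\ $\Gamma(S)\subseteq T$). Using $\binom{n}{k}\le (en/k)^k$, the probability that some $S$ of size $s$ fails left-expansion is bounded by
\[
P_s \;\le\; \binom{n_V}{s}\binom{n_C}{\tau}\Bigl(\frac{\tau}{n_C}\Bigr)^{d_V s}
\;\le\; \Bigl(\frac{e n_V}{s}\Bigr)^{s}\Bigl(\frac{e n_C}{\tau}\Bigr)^{\tau}\Bigl(\frac{\tau}{n_C}\Bigr)^{d_V s}.
\]
Writing $\rho:=s/n_V$ and using $n_C=n_V d_V/d_C$, I would simplify this to an expression of the form $P_s \le (A\,\rho^{\delta_V d_V-1})^{s}$ for some constant $A=A(d_V,d_C,\delta_V)$ depending only on the degrees and $\delta_V$. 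Here is where the hypothesis $d_V>1/\delta_V$ enters crucially: it makes the exponent $\delta_V d_V-1$ strictly positive, so the bracketed quantity can be driven below $1/2$ by choosing $\rho$ (and hence $\gamma_V$) sufficiently small but still $\Omega(1)$.

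Finally I would sum $P_s$ over $1\le s\le \gamma_V n_V$. For $s=O(1)$ one obtains polynomially decaying contributions (handled separately by noting $\rho=s/n_V\to 0$), and for $s$ linear in $n_V$ the bound is exponentially small in $n_V$; a geometric sum then yields $\sum_s P_s = o(1)$. Applying the symmetric argument on the right (with $d_C>1/\delta_C$ and a suitable $\gamma_C=\Omega(1)$) and a final union bound over the two failure events shows that with high probability the random graph is simultaneously $(\gamma_V,\delta_V)$-left- and $(\gamma_C,\delta_C)$-right-expanding.

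The main obstacle I anticipate is the bookkeeping in passing from the configuration-model probability to the exponential bound of the form $(A\rho^{\delta_V d_V-1})^s$, in particular tracking how $A$ and the critical $\gamma_V$ depend on $d_V,d_C,\delta_V$ so that the sum over $s$ converges. A secondary technicality is justifying the naive bound $(\tau/n_C)^{d_V s}$ for the configuration model: strictly speaking one should use a ratio of falling factorials of half-edges, but since $d_V s$ and $\tau d_C$ are both $O(n_V)$ while the total number of right half-edges is $\Theta(n_V)$, the correction factor is bounded by a constant and can be absorbed into $A$.
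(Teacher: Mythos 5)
The paper does not prove this statement at all: it is quoted verbatim as Theorem~8.7 of Richardson--Urbanke \cite{richardson2008modern} and used as a black box. Your first-moment/union-bound argument over the biregular configuration model is exactly the standard proof of that cited result, and it is correct. Two small points on the technicalities you flag. First, the configuration-model bound needs no correction factor at all: matching the $d_V s$ left half-edges of $S$ sequentially, the conditional probability that the $i$-th one lands among the remaining half-edges of $T$ is $\frac{\tau d_C-(i-1)}{n_C d_C-(i-1)}\le \frac{\tau}{n_C}$ (since $\frac{a-x}{b-x}\le\frac{a}{b}$ for $a\le b$), so $\mathbb{P}[\Gamma(S)\subseteq T]\le(\tau/n_C)^{d_V s}$ holds exactly; note also that multi-edges are harmless here because the event $\Gamma(S)\subseteq T$ only concerns where edges land. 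Second, your summation over $s$ skips the intermediate regime between $s=O(1)$ and $s=\Theta(n_V)$; the routine fix is to split at, say, $s=\sqrt{n_V}$: for $s\le\sqrt{n_V}$ each term is at most $(A\,n_V^{-c/2})^s$ with $c=\delta_V d_V-1>0$, giving a geometric sum that is $O(n_V^{-c/2})$, while for $s>\sqrt{n_V}$ each term is at most $2^{-s}$ and the tail is $O(n_V 2^{-\sqrt{n_V}})$; together this gives $o(1)$ as needed. With those remarks the argument is complete, and the hypothesis $d_V>1/\delta_V$ enters precisely where you say it does, to make the exponent $\delta_V d_V-1$ positive.
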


\subsection{Quantum error correcting codes}

A quantum code encoding $k$ logical qubits into $n$ physical qubits is a subspace of $(\bC^2)^{\otimes n}$ of dimension $2^k$. A quantum \emph{stabilizer code} is described by a stabilizer, that is an Abelian group of $n$-qubit Pauli operators (tensor products of single-qubit Pauli operators $X, Y, Z$ and $I$ with an overall phase of $\pm1$ or $\pm i$) that does not contain $- I$. The code is defined as the eigenspace of the stabilizer with eigenvalue $+1$ (\cite{gottesman1997stabilizer}).
A stabilizer code of dimension $k$ can be described by a set of $n-k$ generators of its stabilizer group.

A particularly nice construction of stabilizer codes is given by the CSS construction (\cite{calderbank1996good}, \cite{steane1996error}), where the stabilizer generators are either products of single-qubit $X$ Pauli matrices or products of $Z$ Pauli matrices. The condition that the stabilizer group is Abelian therefore only needs to be enforced between $X$-type generators (corresponding to products of Pauli $X$ operators) and $Z$-type generators. 
More precisely, consider two classical linear codes $\cC_X$ and $\cC_Z$  of length $n$ satisfying $\cC_Z^\perp \subseteq \cC_X$, or equivalently, $\cC_X^\perp \subseteq \cC_Z$. (Here, $\cC_X^\perp$ is the dual code to $\cC_X$ consisting of the words which are orthogonal to all the words in $\cC_X$.) This condition also reads $H_X \cdot H_Z^T=0$, if $H_X$ and $H_Z$ denote the respective parity-check matrices of $\cC_X$ and $\cC_Z$.
The quantum code $CSS(\cC_X, \cC_Z)$ associated with $\cC_X$ (used to correct $X$-type errors and corresponding to $Z$-type stabilizer generators) and $\cC_Z$ (used to correct $Z$-type errors and corresponding to $X$-type stabilizer generators) has length $n$ and is defined as the linear span of $\left\{ \sum_{z \in \cC_Z^\perp} |x +z \rangle \: : \: x \in \cC_X\right\}$, where $\{ |x\rangle \: : \: x\in \bF_2^n\}$ is the canonical basis of $(\bC^2)^{\otimes n}$.
In particular, two states differing by an element of the stabilizer group are equivalent. The dimension of the CSS code is given by $k = \dim (\cC_X/ \cC_Z^\perp) = \dim (\cC_Z/\cC_X^\perp) = \dim \cC_X+ \dim \cC_Z - n$. Its minimum distance is defined in analogy with the classical case as the minimum number of single-qubit Pauli operators needed to map a codeword to an orthogonal one. For the code $CSS(\cC_X, \cC_Z)$, one has $d_{\min} = \min(d_X,d_Z)$ where $d_X = \min \{|E|: E \in \cC_X \setminus {\cC_Z}^{\bot} \}$ and $d_Z = \min \{|E|: E \in \cC_Z \setminus {\cC_X}^{\bot} \}$. We say that  $CSS(\cC_X, \cC_Z)$ is a $[[n,k,d_{\min}]]$ quantum code. In the following, it will be convenient to consider the factor graph $G_X = (V \cup C_X, \cE_X)$ (resp.~$G_Z$) of $\cC_X$ (resp.~of $\cC_Z$). We will denote by  $\Gamma_X$ (resp.~$\Gamma_Z$) the neighbourhood in $G_X$ (resp.~$G_Z$). For instance, if $g \in C_Z$ is an $X$-type generator, that is a product of Pauli $X$ operators, then $\Gamma_Z(g)$ is the set of qubits (indexed by $V$) on which the generator acts non-trivially.

Among stabilizer codes, and CSS codes, the class of quantum LDPC codes stands out for practical reasons: these are the codes for which one can find \emph{sparse} parity-check matrices $H_X$ and $H_Z$. More precisely, such matrices are assumed to have constant row weight and constant column weight. Physically, this means that each generator of the stabilizer acts at most on a constant number of qubits, and that each qubit is acted upon by a constant number of generators. Note, however, that while surface codes exhibit in addition spatial locality in the sense that interactions only involve spatially close qubits (for an explicit layout of the qubits in Euclidean space), we do not require this for general LDPC codes. This means that generators might involve long-range interactions. This seems necessary in order to find constant rate quantum codes with growing minimum distance~\cite{bravyi2009no}.

In this work, we will be concerned with Pauli-type noise, mapping a qubit $\rho$ to $p_{\mathbbm{1}} \rho + p_X X \rho X + p_Y Y \rho Y + p_Z Z \rho Z$, for some $p_{\mathbbm{1}}, p_X, p_Y, p_Z$. Such a noise model is particularly convenient since one can interpret the action of the noise as applying a given Pauli error with some probability. As usual, it is sufficient to deal with both $X$ and $Z$-type errors in order to correct Pauli-type errors, and one can therefore define an error by the locations of the Pauli $X$ and Pauli $Z$ errors.
An \emph{error pattern} is a pair $(e_X, e_Z)$ of $n$-bit strings, which describe the locations of the Pauli $X$ errors, and Pauli $Z$ errors respectively. The syndrome associated with $(e_X, e_Z)$ for the code $CSS(\cC_X, \cC_Z)$ consists of $\sigma_X = \sigma_X(e_X) := H_X e_X$ and $\sigma_Z = \sigma_Z(e_Z) := H_Z e_Z$. 
A decoder is given the pair $(\sigma_X, \sigma_Z)$ of syndromes and should return a pair of errors $(\hat{e}_X, \hat{e}_Z)$ such that $e_X + \hat{e}_X \in \cC_Z^\perp$ and $e_Z + \hat{e}_Z \in \cC_X^\perp$. In that case, the decoder outputs an error equivalent to $(e_X, e_Z)$, and we say that it succeeds.
\\Similarly as in the classical case, it will be convenient to describe $X$-type error patterns and $X$-type syndromes as subsets of the vertices of the factor graph $G_X=(V \cup C_X, \cE_X)$. The error pattern is then described by a subset $E_X \subseteq V$ whose indicator vector is $e_X$ and the syndrome is the subset $\sigma_X(E_X) \subseteq C_X$ defined by $\sigma_X(E_X) := \bigoplus_{v \in E_X} \Gamma_X(v)$. $Z$-type error patterns and $Z$-type syndromes are described in the same fashion using the factor graph $G_Z$.

In the rest of this paper, we will consider special decoding algorithms that try to recover $e_{X}$ and $e_{Z}$ independently. More precisely, a decoding algorithm is given by an $X$-decoding algorithm that takes as input $\sigma_{X}$ and returns $\hat{e}_X$ such that $\sigma_X(\hat{e}_X) = \sigma_X$, and a $Z$-decoding algorithm that takes as input $\sigma_{Z}$ and returns $\hat{e}_Z$ such that $\sigma_Z(\hat{e}_Z) = \sigma_Z$. We note that this special type of decoding algorithm might achieve sub-optimal error probabilities for some error models. In fact, if there are correlations between $X$ and $Z$ errors (for instance in the case of the depolarizing channel where $p_X = p_Y =p_Z$), one can decrease the error probability by trying to recover $e_{X}$ by using both $\sigma_{X}$ and $\sigma_{Z}$. However, for the purpose of this paper, it is sufficient to consider these special decoding algorithms. In addition, as we will consider codes where $H_{X}$ and $H_{Z}$ play symmetric roles, so the $Z$-decoding algorithm is obtained from the $X$-decoding algorithm by exchanging the roles of $X$ and $Z$.

Let us conclude this section by mentioning the \emph{maximum likelihood decoding algorithm} which returns an error $(\hat{e}_X, \hat{e}_Z)$ of minimum Hamming weight with the appropriate syndrome, that is:
\begin{align*}
  \hat{e}_X = \argmin_{\sigma_X(f_X) = \sigma_X} |f_X|, \quad \hat{e}_Z = \argmin_{ \sigma_Z(f_Z) = \sigma_Z} |f_Z|.
\end{align*}
This inefficient algorithm always succeeds provided that the error weights satisfy $|e_X| \leq \left\lfloor (d_X-1)/2\right\rfloor$ and $|e_Z| \leq \left\lfloor (d_Z-1)/2\right\rfloor$.

\subsection{Quantum expander codes}\label{subs:qec}

In this work, we are particularly interested in a family of LDPC CSS codes that features a constant rate and a minimum distance $\Theta(\sqrt{n})$ obtained by applying the hypergraph product construction of Tillich and Z\'emor to classical expander codes. If these expander codes have sufficient expansion, the corresponding quantum code is called \emph{quantum expander code} and comes with an efficient decoding algorithm that corrects arbitrary errors of size linear in the minimum distance. 

The construction is as follows. Let $G = (A \cup B, \cE)$ be a biregular  $(\gamma_A, \delta_A, \gamma_B, \delta_B)$-expanding graph with $\delta_A, \delta_B < 1/6$, and constant left and right degrees denoted $d_A$ and $d_B$. Let us also denote $n_A = |A|$ and $n_B = |B|$ with $n_A < n_B$. According to Theorem \ref{thm:exist}, such graphs can be found efficiently (in a probabilistic fashion) provided that $d_A, d_B \geq 7$.
Let $\cC$ be the classical code associated with $G$, let $d_{\min}(\cC)$ be the minimal distance of $\cC$ and let $H$ be its parity-check matrix (that we assume to be full rank) corresponding to the factor graph $G$. In particular, the weights of rows and columns of $H$ are $d_A$ and $d_B$, respectively. The hypergraph product code of $\cC$ with itself admits the following parity check matrices:
\begin{align*}
  H_X &= \left( I_{n_A} \otimes H, H^T \otimes I_{n_B}\right)\\
  H_Z &= \left( H \otimes I_{n_A}, I_{n_B} \otimes H^T \right).
\end{align*}
It is immediate to check that this defines a legitimate CSS code since 
\begin{align*}
  H_X H_Z^T &=  I_{n_A} \otimes H \cdot ( H \otimes I_{n_A})^T + H^T \otimes I_{n_B} (I_{n_B} \otimes H^T)^T \\
  &= H^T \otimes  H+ H^T \otimes H =0. 
\end{align*}
Moreover, the code is LDPC with generators of weight $d_A + d_B$ and qubits involved in at most $2 \max(d_A, d_B)$ generators.
\\We can describe the factor graphs $G_X$ and $G_Z$ as follows: the set of qubits is indexed by $A^2 \cup B^2$, the set of $Z$-type generators is indexed by $A \times B$ and the set of $X$-type generators is indexed by $B \times A$. The bipartite graph $G_X$ has left vertices $A^2 \cup B^2$, right vertices $A \times B$ and there is an edge between a vertex $(\alpha,a) \in A^2$ (resp. $(b,\beta) \in B^2$) and a vertex $(\alpha,\beta) \in A \times B$ when $a$ (resp. $b$) is in the neighbourhood of $\beta$ (resp. $\alpha$) in $G$. The bipartite graph $G_Z$ has left vertices $A^2 \cup B^2$, right vertices $B \times A$ and there is an edge between a vertex $(\alpha,a) \in A^2$ (resp. $(b,\beta) \in B^2$) and a vertex $(b,a) \in B \times A$ when $\alpha$ (resp. $\beta$) is in the neighbourhood of $b$ (resp. $a$) in $G$.
\\The following theorem summarizes the main properties of this quantum code.
\begin{theorem}[Tillich, Z\'emor \cite{tillich2014quantum}]\label{thm:TZ}
  The CSS code defined above is LDPC with parameters $\left[\left[ n, k, d_{\min}\right]\right]$, where $n = {n_A}^2 + {n_B}^2, k \geq (n_A - n_B)^2$ and $d_{\min} = d_{\min}(\cC)$.
\end{theorem}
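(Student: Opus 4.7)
The statement has three parts. The length $n = n_A^2 + n_B^2$ is immediate from the fact that the qubits are indexed by $A^2 \cup B^2$. The LDPC property was verified in the paragraph preceding the theorem statement: each stabilizer generator has weight $d_A + d_B$ and each qubit participates in at most $2\max(d_A, d_B)$ generators, both of which are constant. So only the dimension bound and the distance claim require genuine argument.

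For the dimension, I would use the CSS identity $k = \dim \ker H_X + \dim \ker H_Z - n$. Since $H$ is $n_B \times n_A$, both $H_X$ and $H_Z$ have exactly $n_A n_B$ rows and therefore rank at most $n_A n_B$; rank-nullity gives $\dim \ker H_X, \dim \ker H_Z \geq n - n_A n_B$. Plugging in yields
\[
k \;\geq\; 2(n_A^2 + n_B^2 - n_A n_B) - (n_A^2 + n_B^2) \;=\; (n_A - n_B)^2.
\]

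The minimum distance is the main technical content. My plan is to reshape an error $e = (e^A, e^B) \in \bF_2^{n_A^2 + n_B^2}$ into a pair of matrices $E^A \in \bF_2^{n_A \times n_A}$ and $E^B \in \bF_2^{n_B \times n_B}$ via column-major vectorization. Using the identity $(B^T \otimes A)\,\mathrm{vec}(X) = \mathrm{vec}(AXB)$, the condition $H_X e = 0$ rewrites as the compact matrix identity $H E^A = E^B H$, while $\mathrm{image}(H_Z^T)$ becomes the set of pairs $(Z H,\, H Z)$ for arbitrary $Z \in \bF_2^{n_A \times n_B}$. From here I would establish a cleaning lemma: by adding a suitable stabilizer $(ZH, HZ)$ one can bring any logical $X$-operator into a form whose support is concentrated on a single row (or column) of $E^A$ or $E^B$. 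On such a row or column, the constraint $H E^A = E^B H$ forces the nonzero pattern to be a codeword of $\cC$ (or of $\ker H^T$), whose weight is at least $d_{\min}(\cC)$ unless that pattern is itself a row of $H$, in which case the cleaned operator is a stabilizer, contradicting the starting assumption. For the matching upper bound, I would take any minimum-weight $c \in \cC$, place it as a single column of $E^A$, set $E^B = 0$, and verify using the full rank of $H$ that the resulting element of $\ker H_X$ is not a stabilizer; this produces a logical of weight exactly $d_{\min}(\cC)$. The lower bound on $d_Z$ then follows by the symmetry of the construction under the exchange of $H$ with $H^T$.

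The hard part will be the cleaning lemma: one has to choose the stabilizer parameter $Z$ so that adding $(ZH, HZ)$ truly concentrates the support without reintroducing weight elsewhere, and the two blocks $E^A, E^B$ are coupled by the equation $H E^A = E^B H$, so progress in one block must be compatible with progress in the other. Full rank of $H$ is essential, as it rules out the degenerate case in which the cleaned row happens to coincide with a row of $H$; without that guarantee the argument would only show the operator is a stabilizer and would produce no lower bound on $d_{\min}$.
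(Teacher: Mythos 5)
The paper does not actually prove this statement: it is imported wholesale from Tillich and Z\'emor \cite{tillich2014quantum}, so there is no internal proof to compare against. Judged on its own terms, your handling of the length, the LDPC property and the dimension is complete and correct. In particular the rank count is exactly the right argument for the inequality as stated: both $H_X$ and $H_Z$ have $n_A n_B$ rows, so $\dim\ker H_X,\ \dim\ker H_Z \geq n - n_A n_B$, and $k = \dim\ker H_X + \dim\ker H_Z - n \geq n - 2 n_A n_B = (n_A - n_B)^2$.

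The distance claim is where essentially all of the content of the theorem lives, and there your proposal is a plan rather than a proof. The matrix reformulation is the right framework (and is the one Tillich--Z\'emor use): $\ker H_X$ is cut out by $H E^A = E^B H$ and the $Z$-stabilizers are the pairs $(ZH, HZ)$. But the cleaning lemma you invoke \emph{is} the lower bound $d_X \geq d_{\min}(\cC)$; asserting that a suitable $Z$ concentrates the support on one row or column, without constructing it, leaves the theorem unproved. Moreover the dichotomy you state at the end of the cleaning step is not quite right: after cleaning, the residual pattern on a row (resp.\ column) is either a nonzero codeword of $\cC$, hence of weight at least $d_{\min}(\cC)$, or an element of the transpose kernel $\ker H^T$ --- and the role of the full-rank hypothesis is precisely to make that kernel trivial, forcing the pattern to vanish and the original operator to be a stabilizer. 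It is not that the pattern ``is a row of $H$.'' The same ingredient is needed, and likewise only gestured at, in your upper bound: to show that placing a minimum-weight codeword $c$ in a single column of $E^A$ (with $E^B = 0$) gives a nontrivial logical class, you must rule out $(E^A, 0) = (ZH, HZ)$, which again comes down to the triviality of the relevant transpose kernel rather than ``full rank'' in the abstract. (A further caveat you have inherited from the paper rather than introduced: the displayed shapes of $H_X$ and $H_Z$ force $H$ to be $n_B \times n_A$ with $n_B > n_A$, which sits awkwardly with ``$H$ full rank'' and ``$\cC = \ker H$ has distance $d_{\min}(\cC)$'' simultaneously; any complete writeup has to fix an orientation of $H$ and track carefully which of $\ker H$, $\ker H^T$ is the code and which is trivial.)
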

Since the graph $G$ is sufficiently expanding, we can apply the results from Ref.~\cite{leverrier2015quantum} and show the existence of an efficient decoding algorithm called ``small-set-flip'' decoding algorithm.
Focusing on $X$-type errors for instance, and assuming that the syndrome $\sigma_X = H_X e_X$ is known, the algorithm cycles through all the $X$-type generators of the stabilizer group (i.e. the rows of $H_Z$), and for each one of them, determines whether there is an error pattern contained in the generator that decreases the syndrome weight.
Assuming that this is the case, the algorithm applies the error pattern (maximizing the ratio between the syndrome weight decrease and the pattern weight). The algorithm then proceeds by examining the next generator. Since the generators have (constant) weight $d_A+d_B$, there are $2^{d_A + d_B}$ possible patterns to examine for each generator. If the graph $G$ has sufficient expansion and if the error weight is small enough, there always exists a generator containing an error pattern decreasing the syndrome weight. It can then be proved that if the error weight is below the value $w_0$ of Eq.~\eqref{eq:w0} below, the decoding algorithm will not stop before reaching a null syndrome, hence corresponding to a codeword. Moreover, because the number of steps of the algorithm is sufficiently low, it is not possible to have reached an incorrect codeword, hence the decoding succeeded. 
\\Let us introduce some additional notations: let $\cX$ be the set of subsets of $V$ corresponding to $X$-type generators: $\cX = \{\Gamma_Z(g): g \in C_Z\} \subseteq \cP(V)$. The indicator vectors of the elements of $\cX$ span the dual code $\cC_Z^\perp$. 
  The condition for successful decoding then asks that there exists a subset $X \subset \cX$ such that
  \begin{align*}
    E \oplus \hat{E} = \bigoplus_{x \in X} x, 
  \end{align*}
  meaning that the remaining error after decoding is trivial, that is equal to a sum of generators.
  At each step, the small-set-flip algorithm tries to flip a subset of some $g \in \cX$. In other words, it tries to flip some element of $\cF := \{F \subseteq x: x \in \cX\}$.

  \begin{algorithm}[H]
    \caption{(Ref.~\cite{leverrier2015quantum}): Small-set-flip decoding algorithm for quantum expander codes
    }\label{algo decodage qec0}
      {\bf INPUT:} $\sigma \subseteq {C}_X$, a syndrome where $\sigma = \sigma_X(E)$ with $E \subseteq {V}$ an error
      \\{\bf OUTPUT:} $\hat{E}\subseteq {V}$, a guess for the error pattern (alternatively, a set of qubits to correct)
      \\{\bf SUCCESS:} if $E \oplus \hat{E} = \bigoplus_{x \in X} x$ for $X \subseteq \cX$, \textit{i.e.} $E$ and $\hat{E}$ are equivalent errors
      
      \hrulefill
      \begin{algorithmic}
        \State{$\hat{E}_0 = 0$ ; $\sigma_0 = \sigma$ ; $i = 0$}
        \While{$\displaystyle \left(\exists F \in \cF:  |\sigma_i|-|\sigma_i \oplus \sigma_X (F)| > 0\right)$}
        \\\State{$\displaystyle F_i = \argmax_{F \in \cF} \frac{|\sigma_i|-|\sigma_i \oplus \sigma_X(F)|}{|F|} \qquad$ // pick an arbitrary one if there are several choices}
        \\\State{$\hat{E}_{i+1} = \hat{E}_i \oplus F_i$}
        \State{$\sigma_{i+1} = \sigma_i \oplus \sigma_X (F_i)$ \qquad // $\sigma_{i+1} = \sigma_X (E \oplus \hat{E}_{i+1})$}
        \State{$i = i+1$}
        \EndWhile
        \\\Return{$\hat{E}_i$}
      \end{algorithmic}
  \end{algorithm}
  
\begin{theorem}[Leverrier, Tillich, Z\'emor \cite{leverrier2015quantum}]\label{thm:LTZ}
  Let $G = (A \cup B, \cE)$ be a $(d_A,d_B)$-biregular  $(\gamma_A, \delta_A, \gamma_B, \delta_B)$-expanding graph with $\delta_A, \delta_B < 1/6$. Letting $d_A$ and $d_B$ be fixed and allowing $n_A, n_B$ to grow, \Cref{algo decodage qec0} runs in time linear in the code length $n=n_A^2 + n_B^2$, and decodes any quantum error pattern of weight less than
  \begin{align}
    \label{eq:w0}
    w_0 = \frac{1}{3(1+d_B)}\min(\gamma_A n_A, \gamma_B n_B).
  \end{align}
\end{theorem}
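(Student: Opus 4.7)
The plan is to combine three ingredients: (i) a locality property of the small-set-flip algorithm, (ii) a site-percolation bound on the adjacency graph of the code, and (iii) the worst-case correction guarantee of \Cref{thm:LTZ}.

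First, I would set up the adjacency graph $\cG=(V,\cE_V)$ on the qubits, where two qubits are linked whenever they lie in the support of a common $X$-type generator. Since the code is LDPC, \Cref{thm:TZ} ensures $\cG$ has bounded degree $D=O(d_A d_B)$. The key structural step—to be carried out in \Cref{sec:algos} and \Cref{sec:appl}—is to show that small-set-flip is ``local'' with some explicit parameter $\alpha\in(0,1]$: there exists $\alpha$ such that the algorithm correctly decodes any error pattern $E\subseteq V$ provided every connected $\alpha$-subset $X\subseteq V$ of $E$ (i.e. $X$ connected in $\cG$ with $|X\cap E|\ge \alpha|X|$) has size at most $w_0=\Theta(\sqrt n)$, the adversarial bound from \Cref{thm:LTZ}. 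Granted this, the task reduces to bounding the probability that a random $E$ contains a large connected $\alpha$-subset.

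For the percolation bound, I would use a union bound over connected vertex subsets (``lattice animals''). A standard fact is that the number of connected subsets of $\cG$ of size $s$ containing a given vertex is at most $(eD)^{s}$. For each such candidate $X$ of size $s$, the event $|X\cap E|\ge \alpha s$ has probability at most
\[
\binom{s}{\lceil \alpha s\rceil}\,p^{\lceil \alpha s\rceil}\;\le\; \Bigl(\tfrac{e}{\alpha}\Bigr)^{\alpha s}p^{\alpha s}
\]
in the independent error model (and an analogous bound in the local stochastic model, which is exactly the setting where replacing ``$\subseteq E$'' by ``$\alpha$-subset of $E$'' is needed). Summing over the starting vertex (factor $n$), over $s\ge w_0$, and over the animals of size $s$, the probability that some connected $\alpha$-subset of $E$ has size $\ge w_0$ is at most
\[
n\sum_{s\ge w_0}(eD)^{s}\Bigl(\tfrac{e}{\alpha}\Bigr)^{\alpha s}p^{\alpha s}
\;=\;n\sum_{s\ge w_0}\bigl(K\,p^{\alpha}\bigr)^{s},
\]
where $K=K(D,\alpha)$ is an explicit constant. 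Choosing $p_0:=K^{-1/\alpha}$ makes the series geometric whenever $p<p_0$, yielding a total bound of at most $\frac{n}{1-p/p_0}\cdot(p/p_0)^{\alpha w_0}$.

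Finally, invoking $w_0\ge \tfrac{1}{3(1+d_B)}\min(\gamma_A n_A,\gamma_B n_B)=\Omega(\sqrt n)$ from \Cref{thm:LTZ} (recall $n=n_A^2+n_B^2$), we absorb the constant $\tfrac{1}{1-p/p_0}\le 2$ into $C$ and set $C'=\alpha\cdot \tfrac{1}{3(1+d_B)}\min(\gamma_A,\gamma_B)/\sqrt{2}$ to obtain the claimed bound $1-Cn(p/p_0)^{C'\sqrt n}$. The main obstacle, and the one that occupies \Cref{sec:algos}, is proving the locality property with an explicit $\alpha$ compatible with sufficient expansion: one must show that the iterative small-set-flip updates cannot propagate through a ``sparse'' region of the random error, so that only clusters with density at least $\alpha$ can possibly foil decoding. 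Once this locality is established, the percolation-plus-animal-counting argument above is essentially routine, and the $\sqrt n$ in the exponent comes directly from the minimum-distance scaling of quantum expander codes.
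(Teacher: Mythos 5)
Your proposal does not prove the statement in question. \Cref{thm:LTZ} is a \emph{worst-case} guarantee: \Cref{algo decodage qec0} corrects \emph{every} error pattern of weight less than $w_0=\frac{1}{3(1+d_B)}\min(\gamma_A n_A,\gamma_B n_B)$, with no probability distribution on the error anywhere in the statement. What you have sketched is instead the probabilistic threshold result (\Cref{main theorem}): a percolation/lattice-animal union bound showing that a \emph{random} error is unlikely to contain a large connected $\alpha$-subset, combined with locality. Worse, your argument is circular with respect to the target: you explicitly list ``the worst-case correction guarantee of \Cref{thm:LTZ}'' as ingredient (iii) and later ``invoke $w_0\ge\frac{1}{3(1+d_B)}\min(\gamma_A n_A,\gamma_B n_B)$ from \Cref{thm:LTZ}.'' You cannot assume the adversarial bound $w_0$ in order to derive it; that bound is precisely what has to be established.

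A correct proof (this is the argument of Leverrier--Tillich--Z\'emor, reproduced in this paper for the variant \Cref{thm:beta} in \Cref{section proof theo capacite correction algo QEC}) is purely combinatorial and uses the expansion of $G$, not percolation. One introduces \emph{reduced} errors and \emph{critical generators} (\Cref{def:crit}) and shows, via the left- and right-expansion of $G$ with $\delta_A,\delta_B<1/6$, that any nonzero reduced error of weight at most $\min(\gamma_A n_A,\gamma_B n_B)$ admits a critical generator containing a subset $F$ whose flip strictly decreases the syndrome weight (\Cref{existance of a critical generator} and the computation in \Cref{existance of a critical generator and error}). One then runs a potential argument: the total weight $\sum_i|F_i|$ of flipped sets is controlled by the initial syndrome weight $|\sigma_X(E_0)|\le d_B|E_0|$, so the residual error $E_i=E_0\oplus\hat E_i$ never exceeds the regime where critical generators exist; hence the while loop can only terminate on the zero syndrome, and the accumulated correction is equivalent to $E_0$ modulo $\cC_Z^\perp$. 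The constant $\frac{1}{3(1+d_B)}$ comes out of exactly this bookkeeping. The linear running time follows because each iteration strictly decreases the syndrome weight, there are at most $|\sigma_X(E)|=O(n)$ iterations, and each iteration examines a constant number ($2^{d_A+d_B}$ per generator) of candidate flips locally. None of these steps appears in your proposal.
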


The analysis above applies to arbitrary errors of weight less than $w_0$. Unfortunately, $w_0 = O(\sqrt{n})$, corresponding to an error rate of $O(1/\sqrt{n})$, which is often insufficient for applications, where a constant error rate is typically required. 
The main goal of this work is to determine the performance of the algorithm above against \emph{random noise} with constant error rate. Such an error model leads to typical errors of size linear in $n$, and Theorem \ref{thm:LTZ} is useless in that regime. The rest of the paper is devoted to show the existence of a threshold for natural models of random noise for \Cref{algo decodage qec0}.

\section{Efficient decoding algorithms for quantum expander codes}
\label{sec:algos}

In this section, we first describe two natural noise models for which we will establish the existence of thresholds for the decoding algorithm of quantum expander codes. 
Then, we will present a variant of the small-set-flip decoding algorithm of Ref.~\cite{leverrier2015quantum}. Finally, we will describe the crucial features of this decoding algorithm, notably its \emph{locality}, and define parameters $\alpha$ and $t$, that will be particularly useful to establish the existence of thresholds.

\subsection{Noise models}
\label{sec:noise}

We think of an error pattern as a pair $(E_X, E_Z)$ of subsets of the set of qubits $V$. A noise model is thus described by a distribution on pairs of subsets of $V$. Letting $(E_X, E_Z)$ be distributed according to such a model, the failure probability of a decoding algorithm that outputs $(\hat{E}_X, \hat{E}_Z)$ is given by 
\begin{align}
&\mathbb{P}((E_X, E_Z) \text{ not equivalent to } (\hat{E}_X, \hat{E}_Z) ) \nonumber \\
&\leq \mathbb{P}(E_X \text{ not equivalent to } \hat{E}_X) + \mathbb{P}(E_Z \text{ not equivalent to } \hat{E}_Z) \ .
\label{eq:decomp_x_z_error}
\end{align}
As the decoders we consider treat $X$ and $Z$ symmetrically, it suffices to focus on $X$-type errors and the analysis of failure probability for identifying $Z$-type errors is exactly symmetrical. For this reason, we will now assume that we deal with $X$-type errors and simply use $E$ for $E_X$.

The most natural error model is that of independent noise, where each qubit is in error independently, with the same probability.
\begin{definition}[Independent noise error model]\label{model erreur}\
  \\Let $V$ be the set of qubits. The error of parameter $p$ is a random variable $E \subseteq V$ such that $\mathbb{P}(E) = p^{|E|} (1-p)^{|V|-|E|}$.
\end{definition}
For example, the $X$-type errors (resp. $Z$-type errors) produced by a depolarizing channel $\rho \mapsto (1 - p) \rho + p/3 (X \rho X + Y \rho Y + Z \rho Z)$ are independent with parameter $2p/3$ since $Y = i XZ = -i ZX$. More generally, a channel $\rho \mapsto p_{\mathbbm{1}} \rho + p_X X \rho X + p_Y Y \rho Y + p_Z Z \rho Z$ satisfies the independent noise condition with parameter $p_X + p_Y$ for the $X$-type errors and parameter $p_Y + p_Z$ for the $Z$-type errors.

The independent noise error model is often too restrictive in practice, notably in the context of fault-tolerance since errors will propagate through the circuit and become correlated. Instead of trying to control such correlations, it is handy to simply allow for any possible error set, but put a bound on the probability of observing an error of a given size. A convenient model was considered by Gottesman in Ref.~\cite{gottesman2013fault} where we ask that such a probability is exponentially small in the error size. 
\begin{definition}[Local stochastic error model]\label{model erreur lc}\
  \\Let $V$ be the set of qubits. The error of parameter $p$ is a random variable $E \subseteq V$ such that for all $F \subseteq V: \mathbb{P}(F \subseteq E) \leq p^{|F|}$. In other words, the location of the errors is arbitrary but the probability of a given error decays exponentially with its weight.
\end{definition}

\subsection{Small-set-flip decoding algorithm for quantum expander codes}
\label{subsec:QAlgo}

Let us now turn to the decoding of quantum codes. As mentioned previously, we focus here on decoding $X$ errors. Given a syndrome $\sigma_X = \sigma_X(E_X)$, the decoding algorithm should output a guess $\hat{E}_X$ which is equivalent to $E_X$, in the sense that $E_X + \hat{E}_X$ belongs to the dual code $\cC_Z^\perp$.

In the description of \Cref{algo decodage qec0}, the idea is to go through all the elements of $C_Z$ (called $X$-type generators and corresponding to products of Pauli $X$ operators), one at a time, and check whether applying any error pattern within this generator would decrease the syndrome weight. If this is the case, then apply the ``best'' such error pattern. Then proceed with the next generator. 
\\It is possible to design \Cref{algo decodage qec}, a slight generalization of \Cref{algo decodage qec0} that depends on an extra-parameter $\beta \in (0,1]$, where a small set $F$ is flipped only if it leads to a decrease of the syndrome weight by at least $\beta d_B |F|$. Note that \Cref{algo decodage qec} is a ``tool'' that we use in order to study \Cref{algo decodage qec0} since any error $E \subseteq V$ corrected by \Cref{algo decodage qec} is corrected by \Cref{algo decodage qec0} (see \Cref{rq:algo}). Requiring a larger value for $\beta$ reduces the number of qubits incorrectly flipped during the decoding procedure. This property will be useful for studying the performance of the algorithm against various random noise models. The drawback, however, is that larger values of $\beta$ mean better expansion (smaller value of $\delta_A$ and $\delta_B$) and therefore larger degrees $d_A$ and $d_B$. The value of $\beta$ should be optimized in order to find the largest possible threshold $p_0$ under which the small-set-flip decoding algorithm will correct random errors, except with negligible probability.

  \begin{algorithm}[H]
    \caption{: Small-set-flip decoding algorithm for quantum expander codes, with parameter $\beta \in (0, 1]$
    }\label{algo decodage qec}
      {\bf INPUT:} $\sigma \subseteq {C}_X$, a syndrome where $\sigma = \sigma_X(E)$ with $E \subseteq {V}$ an error
      \\{\bf OUTPUT:} $\hat{E}\subseteq {V}$, a guess for the error pattern (alternatively, a set of qubits to correct)
      \\{\bf SUCCESS:} if $E \oplus \hat{E} = \bigoplus_{x \in X} x$ for $X \subseteq \cX$, \textit{i.e.} $E$ and $\hat{E}$ are equivalent errors
      
      \hrulefill
      \begin{algorithmic}
        \State{$\hat{E}_0 = 0$ ; $\sigma_0 = \sigma$ ; $i = 0$}
        \While{$\displaystyle \left(\exists F \in \cF:  |\sigma_i|-|\sigma_i \oplus \sigma_X (F)| \geq \beta d_B |F|\right)$}
        \\\State{$\displaystyle F_i = \argmax_{F \in \cF} \frac{|\sigma_i|-|\sigma_i \oplus \sigma_X(F)|}{|F|} \qquad$ // pick an arbitrary one if there are several choices}
        \\\State{$\hat{E}_{i+1} = \hat{E}_i \oplus F_i$}
        \State{$\sigma_{i+1} = \sigma_i \oplus \sigma_X (F_i)$ \qquad // $\sigma_{i+1} = \sigma_X (E \oplus \hat{E}_{i+1})$}
        \State{$i = i+1$}
        \EndWhile
        \\\Return{$\hat{E}_i$}
      \end{algorithmic}
  \end{algorithm}
  \begin{remarque}\label{rq:algo}
    The while loop condition in \Cref{algo decodage qec} is stronger than in \Cref{algo decodage qec0} and this is the only difference between the two algorithms. As a consequence if \Cref{algo decodage qec} corrects an error $E$, then \Cref{algo decodage qec0} corrects $E$.
  \end{remarque}
  Analyzing this modified algorithm can be done similarly as for the original one. For a given quantum expander code, we prove that there exists $\beta_0>0$ such that for $\beta \leq \beta_0$, the small-set-flip decoding algorithm with parameter $\beta$ corrects any adversarial errors of size up to $\Theta(\sqrt{n})$. The value of $\beta_0$ is defined in \Cref{def:correction param}.

  \begin{definition}\label{def:correction param}
    Let $G = (A \cup B, \cE)$ be a $(d_A, d_B)$-biregular $(\gamma_A, \delta_A, \gamma_B, \delta_B)$-left-right-expanding graph. We define $r$ and $\beta_0$ by:
    \begin{align*}
      && r = \frac{d_A}{d_B},
      && \beta_0 = \frac{r}{2 } \left[1 - 4(\delta_A + \delta_B + (\delta_B - \delta_A)^2) \right].
    \end{align*}
  \end{definition}
  Note that $\delta_A, \delta_B < 1/8$ is sufficient to ensure that $\beta_0 > 0$ and there exists such expander graphs as soon as $d_A, d_B \geq 9$ (see \Cref{thm:exist}). As a reminder, the weight of the stabilizer generators is $d_A + d_B$.

  \begin{proposition}\label{thm:beta}
    Let $G = (A \cup B, \cE)$ be a $(d_A, d_B)$-biregular $(\gamma_A, \delta_A, \gamma_B, \delta_B)$-left-right-expanding graph and $r, \beta = \beta_0$ be defined using the associated quantum expander code (\Cref{def:correction param}). The small-set-flip decoding algorithm of parameter $\beta$ described in \Cref{algo decodage qec} can correct any adversarial error of size up to $t_{\mathrm{SSF}(\beta)}$ where:
      \begin{align*}
        t_{\mathrm{SSF}(\beta)} \geq \displaystyle \frac{r \beta}{1+\beta} \min(\gamma_A n_A, \gamma_B n_B).
      \end{align*}
  \end{proposition}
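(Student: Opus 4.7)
The proof closely mirrors the analysis of Theorem~\ref{thm:LTZ} from~\cite{leverrier2015quantum}, with the modifications needed to handle the stricter while-loop condition $|\sigma_i| - |\sigma_i \oplus \sigma_X(F)| \geq \beta d_B |F|$ of \Cref{algo decodage qec}. The plan is to (i) prove a ``progress lemma'' asserting that whenever the current residual error $E' = E \oplus \hat{E}_i$ lies in a suitable regime and is not equivalent to zero modulo $\cC_Z^\perp$, there exists $F \in \cF$ satisfying exactly the while-loop inequality with $\beta = \beta_0$; (ii) use the telescoping monovariant on $|\sigma_i|$ to control the total size of the accumulated flips; and (iii) combine (i) and (ii) through an inductive invariant ensuring that the residual error never leaves the regime of the progress lemma, so that the algorithm necessarily terminates with $E \oplus \hat{E}_f \in \cC_Z^\perp$.

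The heart of the proof is the progress lemma, established by an averaging argument over the $X$-type generators $g \in C_Z$. Writing $x_g = \Gamma_Z(g)$, one lower-bounds a weighted sum $\sum_g \max_{F \subseteq x_g}\bigl[|\sigma_X(E')| - |\sigma_X(E' \oplus F)|\bigr]$ using the left- and right-expansion of the base graph $G$ applied separately to $E' \cap A^2$ and $E' \cap B^2$, then invokes an averaging step to isolate a single generator and sub-pattern $F \subseteq x_g$ achieving the required ratio. The precise constant $\beta_0 = (r/2)\bigl[1 - 4(\delta_A + \delta_B + (\delta_B - \delta_A)^2)\bigr]$ of \Cref{def:correction param} is extracted from this averaging: the factor $r = d_A/d_B$ reflects the degree asymmetry in $G_X$ (qubits in $A^2$ have degree $d_A$ while qubits in $B^2$ have degree $d_B$), the linear term $\delta_A + \delta_B$ quantifies the loss of unique-neighbour syndrome bits inside each ``arm'' of a generator's neighbourhood, and the cross-term $(\delta_B - \delta_A)^2$ arises from the interaction between the two arms.

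Granted the progress lemma, the proof concludes as follows. The while-loop condition gives $|\sigma_{i+1}| \leq |\sigma_i| - \beta d_B |F_i|$, and telescoping yields $\sum_i |F_i| \leq |\sigma_0|/(\beta d_B)$. Bounding $|\sigma_0|$ by the maximum qubit degree in $G_X$ times $|E|$ produces a bound of the form $|\hat{E}_i| \leq (r/\beta)|E|$, and hence $|E \oplus \hat{E}_i| \leq |E|\bigl(1 + r/\beta\bigr)$. Choosing $|E| \leq \frac{r\beta}{1+\beta}\min(\gamma_A n_A, \gamma_B n_B)$ (with $\beta \leq \beta_0$) keeps the residual within the hypotheses of the progress lemma at every step. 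Termination then forces $E \oplus \hat{E}_f \in \cC_Z^\perp$, since any non-equivalent residual inside the allowed regime would, by the progress lemma, still admit a qualifying flip and contradict the exit of the while loop. The main obstacle is the progress lemma itself: the original LTZ analysis only required positivity of a single constant and tolerated $\delta_A, \delta_B < 1/6$, whereas here one must track the constant explicitly through the averaging and tighten the expansion hypothesis to $\delta_A, \delta_B < 1/8$ in order to guarantee $\beta_0 > 0$; this bookkeeping, in particular keeping the contributions of $A^2$- and $B^2$-qubits separate, accounts for the intricate form of $\beta_0$.
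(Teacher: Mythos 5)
Your overall architecture is the same as the paper's: a progress lemma asserting that a nontrivial residual error in a suitable regime always admits a flip $F\in\cF$ with syndrome decrease at least $\beta d_B|F|$ (this is the paper's \Cref{existance of a critical generator and error}, a modified Lemma~8 of LTZ built on the critical-generator \Cref{existance of a critical generator}), a telescoping bound on $\sum_i|F_i|$ from the while-loop condition, and the observation that termination plus the progress lemma forces the residual to lie in $\cC_Z^\perp$. However, your bookkeeping misplaces the factor $r$, which is exactly the constant at stake. The maximum qubit degree of $G_X$ is $d_B$ (qubits in $A^2$ have degree $d_A$, qubits in $B^2$ have degree $d_B$, and $d_A\le d_B$ in this paper's convention), so $|\sigma_0|\le d_B|E|$ and telescoping gives $\sum_i|F_i|\le |E|/\beta$, hence $|E\oplus\hat E_f|\le \frac{1+\beta}{\beta}|E|$; your claimed $|\hat E_i|\le (r/\beta)|E|$ would require the maximum qubit degree to be $d_A$ and is not what the derivation yields.

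The genuinely missing ingredient is where $r$ actually enters: the progress lemma cannot be applied to the literal residual $E'=E\oplus\hat E_f$, because the critical-generator computation (the function analysis giving the syndrome decrease, whose minimum at $y=\tfrac12+\delta_B-\delta_A$ produces $\beta_0$) requires the error to be minimal in its coset with respect to the reduced cardinality $\|E\|=|E\cap A^2|/d_B+|E\cap B^2|/d_A$; minimality is what guarantees the flipped pattern satisfies $x+y\le 1$. The paper therefore passes to the $\|\cdot\|$-minimizer $E_R$ of $E'+\cC_Z^\perp$ (legitimate since the syndrome depends only on the coset, so a qualifying flip for $E_R$ contradicts termination just as well), and the conversion $|E_R|\le d_B\|E_R\|\le d_B\|E'\|\le |E'|/r$ is what forces the hypothesis $|E'|\le r\min(\gamma_A n_A,\gamma_B n_B)$ of the progress lemma, matched by $|E|\le\frac{r\beta}{1+\beta}\min(\gamma_A n_A,\gamma_B n_B)$ together with the corrected residual bound $\frac{1+\beta}{\beta}|E|$. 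Without this reduction-to-coset-representative step your ``suitable regime'' is unspecified and the threshold $\frac{r\beta}{1+\beta}\min(\gamma_A n_A,\gamma_B n_B)$ is not justified; your averaging heuristic for $\beta_0$ and the observation that $\delta_A,\delta_B<1/8$ is needed for $\beta_0>0$ are consistent with the paper, but the heart of the modified Lemma~8 is precisely this norm-minimization argument, which your sketch omits.
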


  The proof of \Cref{thm:beta} is provided in \Cref{section proof theo capacite correction algo QEC}.
  Note that \Cref{algo decodage qec} is not deterministic since the $F_i$ are not necessarily unique but if the weight of the error is smaller than $t_{\mathrm{SSF}(\beta)}$, then the error is corrected for any non-deterministic choice of the $F_i$.

  \subsection{Locality of the small-set-flip decoding algorithm}

\label{subsec:local}

  This section is devoted to explicitly state the properties of the small-set-flip decoding algorithm that we need in order to prove \Cref{main theorem}. Recall that our goal is to establish the existence of a threshold: in the random error model of \Cref{model erreur}, if the probability of error is small enough then we correct the error with high probability. An important property of the algorithm that we consider is that it can correct adversarial errors of size $\leq t$ where $t = \Theta(\sqrt{|V|})$ and $V$ is the set of qubits. The second crucial property we will rely on is that the algorithm is local, meaning in words that errors far away in the \emph{adjacency graph} $\cG$ of the code will not interact during the decoding process. The adjacency graph $\cG = (V, \cE)$ associated with the code admits the set of qubits as vertices, and two vertices are adjacent in $\cG$ if the corresponding qubits share an $X$-type or $Z$-type stabilizer generator.

  Let us now define the notion of locality more formally. For an input syndrome $\sigma_{X}(E) \subseteq C_X$ with $E \subseteq V$ an error, the output of \Cref{algo decodage qec} is $\hat{E} = F_0 \oplus \ldots \oplus F_{f-1}$, where $F_i$ is the small set flipped at the $i^{\mathrm{th}}$ round of the algorithm, and we will call $U = E \cup F_0 \cup \ldots \cup F_{f-1}$ the support. We have $\hat{E} \subseteq U$ but $\hat{E} \neq U$ in general.

  \begin{proposition}[Locality of the small-set-flip decoding algorithm]\label{prop:locality}
    Let $E \subseteq V$ be an error and run \Cref{algo decodage qec} on input $\sigma_X(E)$. Consider $K$, a connected component in $\cG$ of $U$ where $U = E \cup F_0 \cup \ldots \cup F_{f-1}$ is the support and $\hat{E} = F_0 \oplus \ldots \oplus F_{f-1}$ is the output.
    \\Then there is a valid execution of \Cref{algo decodage qec} with input $\sigma_X(E \cap K)$, output $\hat{E} \cap K$ and support $K$.
  \end{proposition}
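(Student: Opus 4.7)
My plan is to exhibit one specific execution of \Cref{algo decodage qec} on input $\sigma_X(E \cap K)$: I will reuse exactly the flips from the original run that lie inside $K$, in their original order. The preliminary observation is that each $F_j$ is contained in some $x \in \cX$ whose qubits share an $X$-type stabilizer generator and hence form a clique in $\cG$; consequently $F_j$ lies in a single connected component of $U$. Let $F_{i_1}, \ldots, F_{i_m}$ (with $i_1 < \cdots < i_m$) be the subsequence of flips contained in $K$.

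The structural fact I will lean on throughout is the following. Call a check $c \in C_X$ \emph{involved in $K$} if $\Gamma_X(c) \cap K \neq \emptyset$; the same clique argument applied to $\Gamma_X(c)$ forces $\Gamma_X(c) \cap U \subseteq K$ for every such check. Two consequences follow: for $j \notin \{i_1, \ldots, i_m\}$ one has $F_j \cap \Gamma_X(c) = \emptyset$ for every check involved in $K$ (such flips do not alter the syndrome on $K$-checks), while $\sigma_X(F_{i_\ell})$ is supported entirely on checks involved in $K$ (since $F_{i_\ell} \subseteq K$).

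The core of the argument is a syndrome-matching invariant proved by induction on $\ell \in \{0, 1, \ldots, m\}$: after the restricted run has performed $\ell$ flips ($F_{i_1}, \ldots, F_{i_\ell}$), the restricted syndrome $\sigma'_\ell$ agrees with the original syndrome $\sigma_{i_{\ell+1}}$ on every check involved in $K$ and vanishes on every other check, with the convention $i_{m+1} := f$. The base case uses that $E \setminus K$ contributes nothing to $K$-checks and that $F_0, \ldots, F_{i_1 - 1}$ do not either. The inductive step uses that the restricted run applies $F_{i_{\ell+1}}$ at step $\ell+1$, reproducing the original update on $K$-checks, while the skipped non-$K$ flips of the original run between indices $i_{\ell+1} + 1$ and $i_{\ell+2}$ do not alter $K$-checks.

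Granting the invariant, the argmax comparison is routine: splitting over checks gives $|\sigma'_\ell| - |\sigma'_\ell \oplus \sigma_X(F)| \leq |\sigma_{i_{\ell+1}}| - |\sigma_{i_{\ell+1}} \oplus \sigma_X(F)|$ for every $F \in \cF$, with equality whenever $\sigma_X(F)$ is supported only on $K$-checks, and in particular for $F = F_{i_{\ell+1}}$. Since $F_{i_{\ell+1}}$ is an argmax meeting the threshold $\beta d_B$ in the original run, it remains an argmax meeting the threshold in the restricted run, so the restricted algorithm may legitimately pick it at step $\ell+1$. The same comparison at step $m$, using $\sigma'_m$ and the terminal syndrome $\sigma_f$, shows that no $F$ meets the while condition and the restricted loop terminates. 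The output is then $F_{i_1} \oplus \cdots \oplus F_{i_m} = \hat E \cap K$ (each $F_j$ is entirely inside or entirely outside $K$), and the support is $(E \cap K) \cup \bigcup_\ell F_{i_\ell} = K$ (every vertex of $K$ lies in $E$ or in some $F_j \subseteq K$). The main obstacle will be formulating the syndrome-matching invariant cleanly; everything else is bookkeeping built on the clique property of $\Gamma_X(c)$.
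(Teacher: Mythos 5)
Your proposal is correct and follows essentially the same route as the paper: the dichotomy $F_j \subseteq K$ or $F_j \cap K = \varnothing$ from the clique property of generator supports, the syndrome-matching invariant $\sigma'_\ell = \sigma_{i_{\ell+1}} \cap C_K$ (the paper's \Cref{lem:synd local} and Eq.~\eqref{eq:synd equal}), and the check-wise comparison of the syndrome-decrease function with equality for $F \subseteq K$ (the paper's \Cref{lemme quantique ok}) to transfer the argmax, threshold, and termination conditions. The only differences are cosmetic (indexing and phrasing of the invariant), so no gap to report.
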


    Let's give the intuition of the proof of \Cref{prop:locality} (the formal proof is given in \Cref{sec:appl}). The execution of \Cref{algo decodage qec} on the input syndrome $\sigma_X(E)$ provides some sets $F_0, \ldots, F_{f-1}$ that we use to define the support $U = E \cup F_0 \cup \ldots \cup F_{f-1}$ and the output $\hat{E} = F_0 \oplus \ldots \oplus F_{f-1}$. We would like to prove that there is a valid execution of the algorithm with input $\sigma_X(E \cap K)$, output $\hat{E} \cap K$ and support $K$ where $K$ is some connected component of $U$ in $\cG$. In other words, we would like to prove that on the input syndrome $\sigma_X(E \cap K)$, \Cref{algo decodage qec} provides some sets $F'_0, \ldots, F'_{f'-1}$ (as well as $\sigma'_0, \ldots, \sigma'_{f'-1}$) which satisfy $K = (E \cap K) \cup F'_0 \cup \ldots \cup F'_{f'-1}$ and $\hat{E} \cap K = F'_0 \oplus \ldots \oplus F'_{f'-1}$.
  \\For simplicity, we first assume that $F_i \subseteq K$ for all $i$ and we define $C_K = \Gamma_X(K) \subseteq C_X$ to be the set of $Z$-type generators which touch at least one qubit of $K$ in the factor graph $G_X$. Let's prove that a valid choice for $f'$, $F'_i$ and $\sigma'_i$ is given by $f' = f$, $F'_i = F_i$ and $\sigma'_i = \sigma_i \cap C_K$. In the factor graph $G_X$, all the neighbours of $C_K$ are in $K \cup \overline{U}$ and all the neighbours of $\overline{C_K}$ are in $\overline{K}$. Hence, the syndrome part included in $C_K$ is the same for an error $W \subseteq V$ and for the error $W \cap K$: $\sigma_X(W \cap K) = \sigma_X(W) \cap C_K$ (\Cref{lem:synd local}). If we apply this formula with $W = E$, we get $\sigma'_0 = \sigma_X(E \cap K)$ which matches the input of \Cref{algo decodage qec} in the conclusion of \Cref{prop:locality}. With $W = F_i$ and our additional hypothesis $F_i \subseteq K$, we get $\sigma'_{i+1} = \sigma'_i \oplus \sigma_X(F_i)$ which matches the manner in which $\sigma_i$ is updated in \Cref{algo decodage qec}. On the other hand, the value of the syndrome $\sigma_i$ determines the algorithm behaviour and, similarly, the value of $\sigma'_i = \sigma_i \cap C_K$ determines the algorithm behaviour on the qubits of $K$ (this is a consequence of \Cref{lemme quantique ok}). As a conclusion, flipping the sets $F_i$ is a valid execution for \Cref{algo decodage qec} on input $\sigma_X(E \cap K)$ and thus \Cref{prop:locality} holds.
  \\In the general case, we have some indices $i$ satisfying $F_i \nsubseteq K$. Since $K$ is a connected component of $U$ in $\cG$, such $i$ always satisfy $F_i \cap K = \varnothing$ and thus flipping $F_i$ does not change the syndrome part included in $C_K$. Hence, with the sets $F'_i$ defined to be the sets $F_i$ such that $F_i \subseteq K$, we can prove \Cref{prop:locality} as previously.

  \bigskip
  
  The concept of locality allows us to treat the errors piece-wise in the sense that if \Cref{algo decodage qec} corrects $E \cap K$ for any connected component $K$ of $U$, then the entire error $E$ is corrected. On the other hand, we know that \Cref{algo decodage qec} corrects any error of size smaller than some $t_{\mathrm{SSF}(\beta)} = \Theta(\sqrt{n})$. We do not have $|E| \leq t_{\mathrm{SSF}(\beta)}$ in general, but using the concept of $\alpha$-subset, we will prove that with high probability $|E \cap K| \leq |K| \leq t_{\mathrm{SSF}(\beta)}$. Hence \Cref{algo decodage qec} corrects $E \cap K$ and, by locality, $E$ is corrected.

  \begin{definition}[$\alpha$-subsets, $\textrm{MaxConn}_{\alpha}(E)$] \label{def:alpha-subset}\ 
    \\An $\alpha$-subset of a set $E \subseteq V$ is a set $X \subseteq V$ such that $|X \cap E| \geq \alpha |X|$. We denote by $\textrm{MaxConn}_{\alpha}(E)$ the maximum size of a connected $\alpha$-subset of $E$.
  \end{definition}
  Note that a $1$-subset is a subset in the usual sense and $\textrm{MaxConn}_{1}(E)$ is the size of the largest connected subset of $E$.
  As a relevant example, a corollary of \Cref{prop:locality} and of Lemma \ref{lemma:alpha subset in algo} is that the set $K$ defined in \Cref{prop:locality} is a connected $\frac{\beta}{1+\beta}$-subset of $E \cap K$.

  \begin{lemme}\label{lemma:alpha subset in algo}
    Let $E \subseteq V$ be an error and run the small-set-flip decoding algorithm of parameter $\beta > 0$ on input $\sigma_X(E)$. Let $\alpha = \frac{\beta}{1+\beta}$ and $U = E \cup F_0 \cup \ldots \cup F_{f-1}$ be the support. Then $U$ is an $\alpha$-subset of $E$.
  \end{lemme}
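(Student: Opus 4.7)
The plan is to relate $|E|$ and $|U|$ by a budget argument driven by the exit condition of \Cref{algo decodage qec}. Since $E\subseteq U$ we have $|U\cap E|=|E|$, so proving that $U$ is an $\alpha$-subset of $E$ with $\alpha=\beta/(1+\beta)$ reduces to showing $|E|\geq \alpha|U|$, which rearranges to $|E|\geq \beta\bigl(|U|-|E|\bigr)$. Using the set-theoretic fact that $U\setminus E\subseteq F_0\cup\cdots\cup F_{f-1}$, this is in turn implied by the slightly stronger inequality
\[
|E|\;\geq\;\beta\sum_{i=0}^{f-1}|F_i|,
\]
so the whole proof reduces to establishing this single bound.

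To prove it I would telescope the per-iteration syndrome decrease guaranteed by the while-loop condition: at every step $|\sigma_i|-|\sigma_{i+1}|\geq \beta d_B|F_i|$, so summing over $i$ and dropping the non-negative term $|\sigma_f|$ gives
\[
\beta d_B\sum_{i=0}^{f-1}|F_i|\;\leq\;|\sigma_0|\;=\;|\sigma_X(E)|.
\]
The complementary ingredient is the elementary upper bound $|\sigma_X(E)|\leq d_B|E|$, which follows from the identity $\sigma_X(E)=\bigoplus_{v\in E}\Gamma_X(v)$ together with the fact that every qubit of $V$ is incident to at most $d_B$ check-nodes in the factor graph $G_X$, so that $|\sigma_X(E)|\leq \sum_{v\in E}|\Gamma_X(v)|\leq d_B|E|$. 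Combining the two inequalities the factor of $d_B$ cancels and we obtain exactly $\sum_i|F_i|\leq |E|/\beta$.

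There is no real obstacle to overcome here: the remainder is a single union bound $|U|\leq |E|+\sum_{i=0}^{f-1}|F_i|\leq |E|(1+\beta)/\beta$, which when rearranged gives $|E|/|U|\geq \beta/(1+\beta)=\alpha$. The one design feature worth pointing out is that the threshold $\beta d_B|F|$ in the while-loop of \Cref{algo decodage qec} is calibrated precisely to the maximum qubit-degree in $G_X$; it is this matching of constants that makes $d_B$ cancel cleanly between the algorithmic lower bound and the degree-based upper bound, and it is what yields the clean value $\alpha=\beta/(1+\beta)$ stated in the lemma.
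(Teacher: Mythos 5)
Your proposal is correct and follows essentially the same route as the paper's own proof: telescoping the while-loop guarantee $|\sigma_i|-|\sigma_{i+1}|\geq \beta d_B|F_i|$, bounding $|\sigma_X(E)|\leq d_B|E|$, and concluding with the union bound $|U|\leq |E|+\sum_i|F_i|\leq \frac{1+\beta}{\beta}|E|=\frac{1+\beta}{\beta}|E\cap U|$. The only difference is presentational (you first reduce the claim to $|E|\geq\beta\sum_i|F_i|$ before doing the same computation), so there is nothing to add.
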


  \begin{proof}
    For all $i$, the decrease in the syndrome weight $|s_i| - |s_{i+1}|$ satisfies $|s_i| - |s_{i+1}| \geq \beta d_B |F_i|$ and thus $d_B |E| \geq |s_0| \geq |s_0| - |s_f| \geq \beta d_B \sum_i |F_i|$. We conclude that:
    \begin{align*}
      |U|
      = \left|E \cup \left(\bigcup_{i=0}^{f-1} F_i\right)\right|
      \leq |E| + \sum_{i=0}^{f-1} |F_i|
      \leq \frac{1 + \beta}{\beta} |E|
      = \frac{1 + \beta}{\beta} |E \cap U|.
    \end{align*}
  \end{proof}

  All these notions allow us to prove the following general characterization of correctable errors. 
  \begin{proposition}\label{lemme correction}
    Suppose that for some quantum expander code, \Cref{algo decodage qec} with parameter $\beta > 0$ corrects any error $E \subseteq V$ satisfying $|E| \leq t$.
    \\Then any error $E \subseteq V$ satisfying $\textrm{MaxConn}_{\alpha}(E) \leq t$ with $\alpha = \frac{\beta}{1 + \beta}$ is corrected.
  \end{proposition}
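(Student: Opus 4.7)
The plan is to run Algorithm \ref{algo decodage qec} with parameter $\beta$ on input $\sigma_X(E)$, and then reduce the analysis of the global error to independent local executions on the connected components of the support in the adjacency graph $\cG$. Specifically, let $F_0, \ldots, F_{f-1}$ be the small sets flipped during the execution, $U = E \cup F_0 \cup \ldots \cup F_{f-1}$ the support, and $\hat{E} = F_0 \oplus \ldots \oplus F_{f-1}$ the output; since $E, \hat{E} \subseteq U$, one has $E \oplus \hat{E} \subseteq U$. I decompose $U = K_1 \sqcup \ldots \sqcup K_m$ into its connected components in $\cG$ and treat each $K_j$ separately.

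For a fixed component $K_j$, \Cref{prop:locality} provides a valid execution of \Cref{algo decodage qec} on input $\sigma_X(E \cap K_j)$ with output $\hat{E} \cap K_j$ and support exactly $K_j$. Applying \Cref{lemma:alpha subset in algo} to this \emph{local} execution (with the same $\beta$, hence the same $\alpha = \beta/(1+\beta)$) yields that $K_j$ is an $\alpha$-subset of $E \cap K_j$; since $E \cap K_j \subseteq E$ and $K_j \cap E = K_j \cap (E \cap K_j)$, this makes $K_j$ a connected $\alpha$-subset of $E$ in $\cG$. The hypothesis $\textrm{MaxConn}_\alpha(E) \leq t$ therefore forces $|K_j| \leq t$, and a fortiori $|E \cap K_j| \leq t$.

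By the correction assumption on \Cref{algo decodage qec} (which holds for any admissible choice of flips, see \Cref{rq:algo} and the discussion after \Cref{thm:beta}), the local execution above succeeds, so there exists $X_j \subseteq \cX$ with $(E \cap K_j) \oplus (\hat{E} \cap K_j) = \bigoplus_{x \in X_j} x$. Assembling the pieces using the disjointness of the $K_j$ and the fact that both $E$ and $\hat{E}$ are contained in $U = \bigsqcup_j K_j$,
\[
E \oplus \hat{E} \;=\; \bigoplus_{j=1}^m \bigl((E \cap K_j) \oplus (\hat{E} \cap K_j)\bigr) \;=\; \bigoplus_{j=1}^m \bigoplus_{x \in X_j} x \;=\; \bigoplus_{x \in X} x,
\]
with $X = X_1 \oplus \ldots \oplus X_m \subseteq \cX$ (symmetric difference taken in $\cX$ to absorb any repeated generators). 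This is exactly the success condition.

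The main obstacle is conceptual rather than computational: one must apply \Cref{lemma:alpha subset in algo} to the \emph{local} execution delivered by \Cref{prop:locality}, not to the global one. Only the local version upgrades the weak global statement ``$U$ is an $\alpha$-subset of $E$'' into the much stronger ``every connected component of $U$ is a connected $\alpha$-subset of $E$'', and it is precisely this componentwise $\alpha$-subset property that lets the hypothesis $\textrm{MaxConn}_\alpha(E)\leq t$ be converted into per-component weight bounds, so that the weight-$t$ correction guarantee can be invoked on each $K_j$. Once this is in place, the rest is just distributing symmetric difference over a disjoint decomposition.
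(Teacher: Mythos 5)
Your proof is correct and follows essentially the same route as the paper: decompose the support $U$ into connected components, invoke \Cref{prop:locality} to obtain a valid local execution on each component, apply \Cref{lemma:alpha subset in algo} to that local execution to see each component is a connected $\alpha$-subset of $E$, use $\textrm{MaxConn}_\alpha(E)\leq t$ to bound each piece, and conclude componentwise correction. Your extra bookkeeping with the sets $X_j$ just makes explicit the equivalence argument the paper states in words, so there is nothing to add.
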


  \begin{proof}
    Our goal is to prove that when we apply \Cref{algo decodage qec} on the input $\sigma_X(E)$, the error $E$ and the output $\hat{E} = F_0 \oplus \ldots \oplus F_{f-1}$ are equivalent. Let $U = E \cup F_0 \cup \ldots \cup F_{f-1}$ be the support, then we can decompose $E$ as $E = \biguplus_K (E \cap K)$ and $\hat{E}$ as $\hat{E} = \biguplus_K (\hat{E} \cap K)$, where the index $K$ of these disjoint unions goes through the set of connected components of $U$. In order to prove that $E$ and $\hat{E}$ are equivalent, we will prove the sufficient condition that $E \cap K$ and $\hat{E} \cap K$ are equivalent for any $K$. By \Cref{prop:locality}, there is a valid execution of \Cref{algo decodage qec} with input $\sigma_X(E \cap K)$, output $\hat{E} \cap K$ and support $K$. By \Cref{lemma:alpha subset in algo}, $K$ is an $\alpha$-subset of $E \cap K$ and thus $K$ is a connected $\alpha$-subset of $E$. If we suppose $\textrm{MaxConn}_{\alpha}(E) \leq t$ then $|E \cap K| \leq |K| \leq t$ and we can conclude that $E \cap K$ and $\hat{E} \cap K$ are equivalent because \Cref{algo decodage qec} corrects $E \cap K$.
  \end{proof}

  \begin{remarque}
    A similar analysis has been done in \cite{kovalev2013fault} and \cite{gottesman2013fault} for the maximum likelihood decoding algorithm of quantum LDPC codes and we could also apply the same arguments for the bit-flip algorithm of classical expander codes. However, this is less relevant since this algorithm can already correct adversarial errors of linear size. In fact, the maximum likelihood decoding algorithm and the bit-flip algorithm are both local in the sense of \Cref{prop:locality}.
  \end{remarque}


  \section{Properties of random errors: $\alpha$-percolation}
  \label{sec:perc}

  In order to establish Theorem \ref{main theorem} at the end of this section, we first study a specific model of site percolation that is relevant for the analysis of our decoding algorithms. More specifically, for a graph $\cG = (V, \cE)$ with degree upper bounded by some constant $d$, and a probabilistic model for choosing a random subset $E \subseteq V$, site percolation asks about the size of the largest connected subset of vertices of $E$. Here, instead of looking for a connected subset of $E$, we instead ask about the size $\textrm{MaxConn}_{\alpha}(E)$ of the largest $\alpha$-subset of $E$ which is connected. 

  The simplest model for the choice of $E$ is the i.i.d.~model where each site (or vertex) is occupied independently with probability $p$. In this model, a standard result in percolation theory is that for the infinite $d$-regular tree, if $p > \frac{1}{d-1}$ there is an infinite connected subset of $E$ and if $p < \frac{1}{d-1}$ all connected subsets are finite~\cite{lyons1992random}. In the setting of finite graphs, for a random $d$-regular graph the maximum connected subset of $E$ is of size $O(\log n)$ for $p < \frac{1}{d-1}$ and of size $\Omega(n)$ for $p > \frac{1}{d-1}$~\cite{janson2009percolation}. For general graphs with degree bounded by $d$, \cite{kovalev2013fault}~showed that for $p < \frac{1}{d-1}$ the size of the largest connected component of $E$ in $\cG$ has size $O(\log |V|)$. We will also consider the more general \emph{local stochastic model} where the probability of a subset $E$ is upper bounded by $p^{|E|}$.

  \begin{theorem}[$\alpha$-percolation] 
    \label{thm:percolation}
    Let $\cG = (V, \cE)$ a graph with degree upper bounded by $d$. Let $\alpha \in (0,1]$ and let $t\geq 1$ be an integer. Let
      \begin{align}
        \label{eq:pls}
        p_{\mathrm{ls}} = \left(\frac{2^{-h(\alpha)}}{(d-1)(1+\frac{1}{d-2})^{d-2}}\right)^{\frac{1}{\alpha}} \ ,
      \end{align}
      where $h(\alpha) = - \alpha \log_2 \alpha - (1-\alpha) \log_2(1-\alpha)$ is the binary entropy function.
      Then, for an error random variable $E$ satisfying the local stochastic noise property (Definition~\ref{model erreur lc}) with parameter $p < p_{\mathrm{ls}}$, we have
      \begin{align}
        \label{eq:probls}
        \mathbb{P}[\textrm{MaxConn}_\alpha(E) \geq t] \leq  C |V|\left(\frac{p}{p_{\mathrm{ls}}}\right)^{\alpha t}
      \end{align}
      where $1/C = (1-2^{h(\alpha)/\alpha}p) \left(1-\left(\frac{p}{p_{\mathrm{ls}}} \right)^{\alpha} \right)$.  
      In the special case, where vertices are chosen independently with probability $p < \frac{\alpha}{d-1}$, the result can be slightly improved to 
      \begin{align}
        \label{eq:probiid}
        \mathbb{P}[\textrm{MaxConn}_\alpha(E) \geq t] \leq |V| \left(\frac{d-1}{d-2}\right)^{2} \frac{q^t}{1-q} \ ,
      \end{align}
      where $q = (1-p)^{d-1-\alpha} p^{\alpha} 2^{h(\alpha)}(d-1)(1+\frac{1}{d-2})^{d-2} $.
  \end{theorem}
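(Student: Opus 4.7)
The plan is to combine a cluster enumeration bound for connected subgraphs of a bounded-degree graph with the local stochastic hypothesis via a union bound over all candidate witnesses $X \subseteq V$. Given $\alpha$ and $t$, I would enumerate over connected vertex subsets $X$ of size $s \geq t$, and for each $X$ bound the probability that $X$ is actually an $\alpha$-subset of $E$, then sum over $s$ and $v$.

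First, for a fixed $X \subseteq V$ of size $s$, note that $X$ is an $\alpha$-subset of $E$ iff some $F \subseteq X$ of size $\lceil \alpha s \rceil$ lies entirely in $E$. A union bound over such $F$ together with the local stochastic property $\mathbb{P}[F \subseteq E] \leq p^{|F|}$ gives
\[
\mathbb{P}\bigl[X \text{ is an $\alpha$-subset of } E\bigr] \;\leq\; \binom{s}{\lceil \alpha s \rceil}\,p^{\alpha s} \;\leq\; 2^{h(\alpha) s}\,p^{\alpha s}.
\]
Second, since $\cG$ has maximum degree $d$, the number $N_s(v)$ of connected vertex subsets of size $s$ containing a fixed vertex $v$ is bounded by the corresponding count in the $d$-regular infinite tree. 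Encoding such subtrees by their depth-first traversal as rooted plane trees with at most $d-1$ children per non-root vertex, a standard ballot/Catalan-type estimate yields
\[
N_s(v) \;\leq\; \tfrac{1}{(d-2)s+1}\binom{(d-1)s}{s-1},
\]
whose leading exponential behavior under Stirling is precisely $\bigl[(d-1)(1+\tfrac{1}{d-2})^{d-2}\bigr]^{s}$, with a sub-exponential polynomial correction that I will absorb into the final multiplicative constant.

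Union-bounding over the starting vertex and summing over $s \geq t$,
\[
\mathbb{P}\bigl[\textrm{MaxConn}_\alpha(E)\geq t\bigr] \;\leq\; |V| \sum_{s \geq t} N_s(v)\cdot 2^{h(\alpha)s}\, p^{\alpha s}.
\]
The definition \eqref{eq:pls} of $p_{\mathrm{ls}}$ is calibrated exactly so that $\bigl[(d-1)(1+\tfrac{1}{d-2})^{d-2}\bigr]\cdot 2^{h(\alpha)}\cdot p_{\mathrm{ls}}^{\alpha} = 1$, hence the summands form a geometric series of ratio $(p/p_{\mathrm{ls}})^{\alpha} < 1$ (up to the absorbed polynomial correction). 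Summing gives the displayed bound $C|V|(p/p_{\mathrm{ls}})^{\alpha t}$, with $1/C$ being the product of two geometric-tail factors: one of the form $(1 - (p/p_{\mathrm{ls}})^{\alpha})^{-1}$ from the tail over $s$, and a second of the form $(1 - 2^{h(\alpha)/\alpha} p)^{-1}$ that I expect to obtain from handling the ceiling $\lceil \alpha s \rceil$ and the polynomial correction in $N_s(v)$ by an auxiliary geometric resummation.

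For the i.i.d.\ refinement \eqref{eq:probiid}, the first step sharpens to the exact binomial tail $\mathbb{P}[|X\cap E|\geq \alpha s]\leq (p^\alpha (1-p)^{1-\alpha}2^{h(\alpha)})^s$, valid by a Chernoff-type argument under $p<\alpha$. Under true independence one can moreover extract an additional $(1-p)^{d-2}$ factor per vertex during a breadth-first exposition of the candidate set — essentially by conditioning on tree-external neighbors of each explored vertex being closed — leading, when combined with the same cluster enumeration, to the stated rate $q = (1-p)^{d-1-\alpha}p^{\alpha}2^{h(\alpha)}(d-1)(1+\tfrac{1}{d-2})^{d-2}$ and the displayed regime $p<\alpha/(d-1)$. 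The main obstacle I anticipate is the cluster enumeration step: extracting the sharp exponential base $(d-1)(1+\tfrac{1}{d-2})^{d-2}$ and tracking the polynomial and ceiling slack precisely enough to match the stated form of $C$ (and, for the i.i.d.\ case, the exact exponent $d-1-\alpha$ in $q$) requires careful bookkeeping, whereas the probabilistic ingredients are, by contrast, fairly routine.
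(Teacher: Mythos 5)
Your treatment of the local stochastic bound \eqref{eq:probls} follows essentially the same route as the paper: a union bound over connected sets of each size $s\geq t$, an entropy bound on the probability that a fixed connected $X$ is an $\alpha$-subset, the same Raney/Catalan-type enumeration of connected sets containing a fixed vertex, and a geometric resummation calibrated so that $(d-1)(1+\tfrac{1}{d-2})^{d-2}\,2^{h(\alpha)}\,p_{\mathrm{ls}}^{\alpha}=1$. Two small remarks. First, the intermediate inequality $\binom{s}{\lceil\alpha s\rceil}\leq 2^{h(\alpha)s}$ is false in general (e.g.\ $s=3$, $\alpha=0.1$ gives $\binom{3}{1}=3>2^{3h(0.1)}\approx 2.65$); the correct chain is $\binom{s}{m}p^{m}\leq\bigl(2^{h(\alpha)/\alpha}p\bigr)^{m}\leq\bigl(2^{h(\alpha)/\alpha}p\bigr)^{\alpha s}=2^{h(\alpha)s}p^{\alpha s}$ with $m=\lceil\alpha s\rceil$, using that $x\mapsto h(x)/x$ is non-increasing and that $2^{h(\alpha)/\alpha}p<1$ (guaranteed by $p<p_{\mathrm{ls}}$) — which is exactly the paper's argument. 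Second, no polynomial slack needs to be absorbed: the Raney prefactor is at most $1$ and $\binom{(d-1)s}{s}\leq 2^{(d-1)s\,h(1/(d-1))}=\bigl[(d-1)(1+\tfrac{1}{d-2})^{d-2}\bigr]^{s}$, so your route actually yields a constant no worse than the stated $C$ (the paper's extra factor $(1-2^{h(\alpha)/\alpha}p)^{-1}$ only arises because it sums over all $m\geq\alpha s$ instead of fixing $m=\lceil\alpha s\rceil$).

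The genuine gap is in the i.i.d.\ refinement \eqref{eq:probiid}. Extracting ``an additional $(1-p)^{d-2}$ factor per explored vertex by conditioning on tree-external neighbours being closed'' is not valid as stated: for a fixed witness $X$ the only factor available is $(1-p)^{|\partial X|}$, where $\partial X=\Gamma(X)\setminus X$, and $|\partial X|$ can be far smaller than $(d-2)|X|$ (internal edges and shared neighbours); moreover the event that $X$ is an $\alpha$-subset of $E$ carries no information about $\partial X$ being error-free, so the factor cannot be inserted at all without a further idea. The paper supplies two: (i) a maximality argument — if $\textrm{MaxConn}_\alpha(E)\geq t$, one may repeatedly add vertices of $\partial X\cap E$ to the witness, which preserves $|X\cap E|\geq\alpha|X|$, so one may restrict attention to witnesses with $\partial X\cap E=\varnothing$; and (ii) a reweighting at the critical parameter $\tfrac{1}{d-1}$: write $(1-p)^{|\partial X|}=\bigl(\tfrac{1-p}{1-1/(d-1)}\bigr)^{|\partial X|}\bigl(1-\tfrac{1}{d-1}\bigr)^{|\partial X|}$, bound the first factor via $|\partial X|\leq(d-2)s+2$ (legitimate since its base exceeds $1$ when $p<\tfrac{1}{d-1}$), and control the remaining sum by the normalization $\sum_{X\in\cC_s(v)}\bigl(\tfrac{1}{d-1}\bigr)^{s}\bigl(1-\tfrac{1}{d-1}\bigr)^{|\partial X|}\leq 1$, which holds because the events $\{X\subseteq E,\ \partial X\cap E=\varnothing\}$ are pairwise disjoint over $X\in\cC_s(v)$. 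Combining this with the Chernoff bound $2^{-sD(\alpha\|p)}$ gives exactly the stated $q=(1-p)^{d-1-\alpha}p^{\alpha}2^{h(\alpha)}(d-1)(1+\tfrac{1}{d-2})^{d-2}$; without the maximality step and this normalization identity, the breadth-first heuristic you sketch does not go through.
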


  Note that~\eqref{eq:probiid} shows that if $p$ is chosen small enough so that $q < 1$ the size of the largest component will be $O(\log |V|)$ with high probability. It is simple to see that choosing $p < p_{\mathrm{ls}}$ leads to $q < 1$ but one can even take $p$ slightly larger. For example, in the case $\alpha = 1$, we have $q < 1$ as soon as $p < \frac{1}{d-1}$ even though $p_{\mathrm{ls}} = \frac{1}{(d-1)(1+\frac{1}{d-2})^{d-2}}$. More generally, $q$ as a function of $p$ is increasing in the interval $p \in [0, \frac{\alpha}{d-1}]$ and thus there is a value $p_{\mathrm{iid}} \in [p_{\mathrm{ls}}, \frac{\alpha}{d-1}]$ such that $q  = 1$ (see~\Cref{lem:threshold-iid}) and any $p < p_{\mathrm{iid}}$ leads to a $q < 1$.
  \begin{proof}
    We focus here on the local stochastic noise bound. The finer analysis of the independent noise model in~\eqref{eq:probiid} is left to Appendix~\ref{section percolation}. 

    Let $\cC_s(\cG)$ be the set of connect sets of vertices of size $s$ in $\cG$.
    Applying a union bound, we obtain
    \begin{align*}
      \mathbb{P}[\textrm{MaxConn}_\alpha(E) \geq t]
      & = \mathbb{P}\left[\exists s \geq t, \exists X \in \cC_s(\cG): |X \cap E| \geq \alpha  |X|\right]\\
      & \leq \sum_{s \geq t} \sum_{X \in \cC_s(\cG)} \mathbb{P}\left[|X \cap E| \geq \alpha  |X|\right].
    \end{align*}
    Let us first consider the quantity $\mathbb{P}\left[|X \cap E| \geq \alpha  |X| \right]$:
    \begin{align*}
      \mathbb{P}\left[|X \cap E| \geq \alpha  |X| \right]
      & \leq \sum_{m \geq \alpha s}\  \sum_{\substack{Y \subseteq X:\\|Y| = m}} \mathbb{P}\left[X \cap E = Y \right]\\
      & \leq \sum_{m \geq \alpha s}\  \sum_{\substack{Y \subseteq X:\\|Y| = m}} \mathbb{P}\left[Y \subseteq E\right]\\
      & \leq \sum_{m \geq \alpha s}\  \sum_{\substack{Y \subseteq X:\\|Y| = m}} p^m \quad \text{from the definition of the noise model}\\
      & \leq \sum_{m \geq \alpha s}\  \binom{s}{m} p^m \ .
    \end{align*}
    Using Lemma~\ref{lem:boundbinomial}, we get
    \begin{align*}
      \mathbb{P}\left[|X \cap E| \geq \alpha  |X| \right] 
      & \leq \sum_{m \geq \alpha s}\  \left(2^{\frac{s}{m} h\big( \frac{m}{s}\big) } p\right)^m \\
      & \leq \sum_{m \geq \alpha s}\  \left(2^{ h(\alpha)/\alpha } p\right)^m  \quad \text{since} \, (x \mapsto h(x)/x) \, \text{is non increasing on} \, [\alpha,1] \\
      & \leq \frac{(2^{ h(\alpha)/\alpha } p)^{\alpha s}}{1-2^{ h(\alpha)/\alpha } p}. 
    \end{align*}
    We now need an upper bound on the number of connected sets of size $s$ in a graph with degree upper bounded by $d$. For this we use Corollary~\ref{lemme nb cc approx} saying that $|\cC_s(\cG)| \leq |V| K(d)^s$ with $K(d) = (d-2) \left(1+\frac{1}{d-2}\right)^{d-2}$.
    Then, we get 
    \begin{align*}
      \mathbb{P}[\textrm{MaxConn}_\alpha(E) \geq t]
      & \leq \sum_{s \geq t} \sum_{X \in \cC_s(\cG)} \mathbb{P}\left[|X \cap E| \geq \alpha  |X|\right)] \\
      & \leq \frac{1}{1-2^{ h(\alpha)/\alpha } p} \sum_{s \geq t} \left|\cC_s(\cG)\right| 2^{ h(\alpha) s} p^{\alpha s} \quad \text{by the union bound}\\
      & \leq \frac{|V|}{1-2^{ h(\alpha)/\alpha } p} \sum_{s \geq t}\left(K(d) 2^{h(\alpha)} p^\alpha\right)^{s} \\ 
      & \leq  \frac{|V|}{1-2^{ h(\alpha)/\alpha } p} \frac{(K(d) 2^{h(\alpha)} p^\alpha)^t}{1- K(d)2^{h(\alpha)} p^\alpha} \ .
    \end{align*}
    Observing that $p_{\mathrm{ls}} = \left(\frac{1}{K(d) 2^{h(\alpha)}}\right)^{1/\alpha}$, we obtain the desired result.
  \end{proof}

  \noindent Using \Cref{thm:beta,lemme correction} and \Cref{thm:percolation}, we are now ready to prove \Cref{main theorem}:

  \begin{proof}[Proof of \Cref{main theorem}]
      Using \Cref{rq:algo}, it is sufficient to prove \Cref{main theorem} for \Cref{algo decodage qec} instead of \Cref{algo decodage qec0}. Under the condition that the expansion of the initial graph $G$ is good enough, the parameter $\beta = \beta_0$ from \Cref{def:correction param} is positive.
    We let $p_0 = p_{\mathrm{ls}}$ as defined in~\eqref{eq:pls} where $\cG$ is the adjacency graph of the quantum expander code and $d = d_B^2 + 2 d_B (d_A - 1)$ is an upper bound on the degree of $\cG$.
    Then for $p < p_0$, for an error $E$ chosen according to a local stochastic noise, and fixing $t = t_{\mathrm{SSF}(\beta)} = \Omega(\sqrt{n})$ as defined in \Cref{thm:beta}, we obtain
    \begin{align*}
      \mathbb{P}[\textrm{MaxConn}_{\alpha}(E) \geq t] \leq C n \left(\frac{p}{p_0}\right)^{\alpha t} \ .
    \end{align*}
    As a result, applying \Cref{lemme correction}, $E$ is corrected except with probability at most $C n \left(\frac{p}{p_0}\right)^{\alpha t}$ and this gives us the desired results.
  \end{proof}

  \begin{remarque}\label{rq optimal threshold}
    It is an interesting question of percolation theory to find the maximum value of $p$ for which the largest connected $\alpha$-subset is of size $O(\log n)$. This will depend on the graph $\cG$ but as mentioned earlier, it is known that for random $d$-regular graphs and $\alpha = 1$, $p = 1/(d-1)$ is the threshold for the appearance of large connected sets (see \cite{janson2009percolation}, Theorem 3.5). In \Cref{prop:ub-percolation}, we show that for $p > \frac{1}{(d-1)^{\frac{1}{\alpha}}}$, the largest connected $\alpha$-subset is of size $\Omega(n^{1-\epsilon})$ and thus Eq.~\eqref{eq:pls} can be improved by a factor of at most $(\frac{2^{-h(\alpha)}}{e})^{\frac{1}{\alpha}}$.
  \end{remarque}

  \section{The small-set-flip decoding algorithm is local (proof of \Cref{prop:locality})}
  \label{sec:appl}

  In this section we show that the small-set-flip decoding algorithm (\Cref{algo decodage qec}) is local in the sense of \Cref{prop:locality} (the intuition of the proof has been given after the statement of the proposition).
  \begin{notations}
    Throughout this section we keep the following notations:
    \begin{itemize}
    \item The sets $\cF, E, K, U, F_0, \ldots, F_{f-1}$ and $\hat{E}$ are defined as in \Cref{algo decodage qec,prop:locality}.
    \item $C_K = \Gamma_X(K) \subseteq C_X$ is the set of $Z$-type generators which touch at least one qubit of $K$ in the factor graph $G_X$.
    \item For $\sigma \subseteq C_X$ and $F \in \cF$: $\displaystyle \Delta(\sigma,F) = \frac{|\sigma|-|\sigma \oplus \sigma_{X}(F)|}{|F|}$ is the function optimized by \Cref{algo decodage qec}.
    \end{itemize}
  \end{notations}
  
  \begin{lemme}\label{lem:synd local}
    Let $W \subseteq U$ (later we will take either $W = E$ or $W = F_i$) then $\sigma_X(W \cap K) = \sigma_X(W) \cap C_K$.
  \end{lemme}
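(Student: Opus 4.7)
The plan is to unfold both sides by the definition $\sigma_X(W) = \bigoplus_{v \in W} \Gamma_X(v)$ and then exploit the fact that intersection distributes over symmetric difference, namely $(\bigoplus_i A_i) \cap B = \bigoplus_i (A_i \cap B)$. Applying this with $B = C_K$ gives
\[
\sigma_X(W) \cap C_K \;=\; \bigoplus_{v \in W} \bigl(\Gamma_X(v) \cap C_K\bigr),
\]
so it will suffice to show that the term $\Gamma_X(v) \cap C_K$ equals $\Gamma_X(v)$ when $v \in W \cap K$ and equals $\varnothing$ when $v \in W \setminus K$. The first case is immediate from the definition $C_K = \Gamma_X(K)$, since any neighbour in $G_X$ of a vertex of $K$ is by definition in $C_K$.

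The key step is the second case: if $v \in W \setminus K$, we want $\Gamma_X(v) \cap C_K = \varnothing$. Suppose for contradiction that some check $c \in C_X$ lies in both $\Gamma_X(v)$ and $C_K = \Gamma_X(K)$. Then in the factor graph $G_X$, the check $c$ is adjacent to $v$ and also to some $v' \in K$, which means qubits $v$ and $v'$ share the $Z$-type stabilizer generator corresponding to $c$. By definition of the adjacency graph $\cG$, this makes $v$ and $v'$ adjacent in $\cG$. Since $W \subseteq U$ gives $v \in U$, and $v' \in K \subseteq U$, and $K$ is a connected component of $U$ in $\cG$, the vertex $v$ must also belong to $K$, contradicting $v \in W \setminus K$.

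Combining the two cases, the sum collapses to $\sigma_X(W) \cap C_K = \bigoplus_{v \in W \cap K} \Gamma_X(v) = \sigma_X(W \cap K)$, which is the claim. I do not expect any genuine obstacle here: the only subtle point is that the argument uses $W \subseteq U$ together with $K$ being a connected component of $U$ (not merely of $W$); this is exactly why the lemma is stated for subsets of the support $U$ rather than for arbitrary subsets of $V$, and this observation is what makes the locality proposition work in the general case where some flipped sets $F_i$ lie outside $K$.
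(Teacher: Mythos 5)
Your proof is correct and follows essentially the same route as the paper's: both reduce the claim to the two facts that $\sigma_X(W\cap K)\subseteq \Gamma_X(K)=C_K$ and that $\Gamma_X(W\setminus K)\cap C_K=\varnothing$, the latter because a shared check would make a vertex of $U\setminus K$ adjacent in $\cG$ to $K$, contradicting that $K$ is a connected component of $U$. The only cosmetic difference is that you distribute the intersection over single-vertex terms while the paper splits $W$ into the two blocks $W\cap K$ and $W\cap\overline{K}$.
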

  \begin{proof}
    We decompose $\sigma_X(W) \cap C_K$ using the partition $W = (W \cap K) \uplus (W \cap \overline{K})$ and the distributivity:
    \begin{align*}
      \sigma_X(W) \cap C_K
      & = \left[\sigma_X(W \cap K) \oplus \sigma_X(W \cap \overline{K})\right] \cap C_K\\
      & = (\sigma_X(W \cap K) \cap C_K) \oplus (\sigma_X(W \cap \overline{K}) \cap C_K).
    \end{align*}
    The first term of this sum satisfies $\sigma_X(W \cap K) \cap C_K = \sigma_X(W \cap K)$ because $\sigma_X(W \cap K) \subseteq \Gamma_X(K) = C_K$. On the other hand, since $K$ is a connected component of $U$ in $\cG$, it holds that $\Gamma_X(U \setminus K) \cap \Gamma_X(K) = \Gamma_X(U \setminus K) \cap C_K = \varnothing$ and thus the second term of the sum satisfies $\sigma_X(W \cap \overline{K}) \cap C_K = \varnothing$ because $W \cap \overline{K} \subseteq U \setminus K$.
\end{proof}
  
  \begin{lemme}\label{lemme quantique ok}
    Let $F \in \cF$ then:
    \begin{enumerate}[label=(\roman*)]
    \item \label{eq:optim1} $\Delta(\sigma \cap C_K,F) \leq \Delta(\sigma,F)$,
    \item \label{eq:optim2} If $F \subseteq K$ then $\Delta(\sigma \cap C_K,F) = \Delta(\sigma,F)$.
    \end{enumerate}
  \end{lemme}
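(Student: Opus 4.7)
The plan is to reduce both $\Delta(\sigma, F)$ and $\Delta(\sigma \cap C_K, F)$ to explicit expressions by using the standard identity $|A \oplus B| = |A| + |B| - 2|A \cap B|$ (viewing subsets of $C_X$ as their indicator vectors under symmetric difference). Applied to the numerator of $\Delta(\sigma, F)$, this gives
\begin{align*}
|\sigma| - |\sigma \oplus \sigma_X(F)| = 2|\sigma \cap \sigma_X(F)| - |\sigma_X(F)|,
\end{align*}
and applied to the numerator of $\Delta(\sigma \cap C_K, F)$ it gives
\begin{align*}
|\sigma \cap C_K| - |(\sigma \cap C_K) \oplus \sigma_X(F)| = 2|(\sigma \cap C_K) \cap \sigma_X(F)| - |\sigma_X(F)|.
\end{align*}

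Subtracting the two numerators (after division by $|F|$ they become $\Delta(\sigma, F)$ and $\Delta(\sigma \cap C_K, F)$), the $-|\sigma_X(F)|$ terms cancel and only a factor $2\bigl(|\sigma \cap \sigma_X(F)| - |\sigma \cap \sigma_X(F) \cap C_K|\bigr) = 2|\sigma \cap \sigma_X(F) \cap \overline{C_K}|$ remains. This quantity is nonnegative, which immediately yields part \ref{eq:optim1}.

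For part \ref{eq:optim2}, it remains to show that the leftover term $|\sigma \cap \sigma_X(F) \cap \overline{C_K}|$ vanishes whenever $F \subseteq K$. Since $\sigma_X(F) = \bigoplus_{v \in F} \Gamma_X(v) \subseteq \bigcup_{v \in F} \Gamma_X(v) = \Gamma_X(F)$, the hypothesis $F \subseteq K$ yields $\sigma_X(F) \subseteq \Gamma_X(K) = C_K$, so $\sigma_X(F) \cap \overline{C_K} = \varnothing$ and the two numerators coincide.

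There is essentially no obstacle here: everything reduces to the inclusion-exclusion identity for symmetric differences together with the elementary set-theoretic fact that $\sigma_X(F) \subseteq \Gamma_X(F)$. The only thing worth flagging is to keep the bookkeeping straight between viewing syndromes as subsets of $C_X$ and as $\mathbb{F}_2$-vectors, but since cardinalities and symmetric differences translate transparently between the two pictures, no additional work is needed.
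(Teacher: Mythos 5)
Your proof is correct and follows essentially the same route as the paper: the paper's key step is precisely the rewriting $\Delta(\sigma,F) = \frac{2|\sigma_X(F) \cap \sigma| - |\sigma_X(F)|}{|F|}$ (your symmetric-difference identity), from which \ref{eq:optim1} is the monotonicity of $|\sigma_X(F)\cap\,\cdot\,|$ under intersection with $C_K$ and \ref{eq:optim2} follows from $\sigma_X(F) \subseteq \Gamma_X(K) = C_K$ when $F \subseteq K$. The only difference is cosmetic: you make the inclusion $\sigma_X(F) \subseteq \Gamma_X(F) \subseteq C_K$ explicit, which the paper leaves implicit.
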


  \begin{proof}
    Rewrite $\Delta(\sigma,F)$ in a more friendly way using the definition of $\oplus$:
    \begin{align}\label{eq:synd decreasing}
      \Delta(\sigma,F) = \frac{2 |\sigma_X(F) \cap \sigma| - |\sigma_X(F)|}{|F|}.
    \end{align}
    The proof of \cref{eq:optim1} is straightforward using \cref{eq:synd decreasing}: let $F \in \cF$, we have:
      \begin{align*}
        \Delta(\sigma \cap C_K,F)
        & = \frac{2 |\sigma_X(F) \cap \sigma \cap C_K| - |\sigma_X(F)|}{|F|} \qquad \text{ by \cref{eq:synd decreasing}}\\
        & \leq \frac{2 |\sigma_X(F) \cap \sigma| - |\sigma_X(F)|}{|F|}\\
        & = \Delta(\sigma,F) \qquad \text{ by \cref{eq:synd decreasing}}
      \end{align*}
      Under the assumption $F \subseteq K$ of \cref{eq:optim2}, we have $\sigma_X(F) \cap C_K = \sigma_X(F)$. As a consequence, the inequality in the proof of \cref{eq:optim1} is an equality and thus \cref{eq:optim2} holds.
  \end{proof}

  We are now ready to prove \Cref{prop:locality}.

  \begin{proof}[Proof of Prop.~\ref{prop:locality}]
    We would like to prove that there is a valid execution of \Cref{algo decodage qec} with input $\sigma_X(E \cap K)$, output $\hat{E} \cap K$ and support $K$. This means that there exists some sets $F'_0, \ldots, F'_{f'-1} \subseteq V$ such that if we define $\sigma'_0 = \sigma_X(E \cap K)$ and $\sigma'_{j+1} = \sigma'_j \oplus \sigma'_X(F'_j)$ for $j \in \llbracket0;f'-1\rrbracket$ then we have the following properties:
    \begin{enumerate}[label=(\roman*)]
    \item \label{it:valid exec1} The set $\hat{E} \cap K$ satisfies $\hat{E} \cap K = F'_0 \oplus \ldots \oplus F'_{f'-1}$.
    \item \label{it:valid exec2} The set $K$ satisfies $K = (E \cap K) \cup F'_0 \cup \ldots \cup F'_{f'-1}$.
    \item \label{it:valid exec3} For $j \in \llbracket0;f'-1\rrbracket$, $F'_j$ maximises $F \mapsto \Delta(\sigma'_j,F)$ over $\cF$.
    \item \label{it:valid exec4} For $j \in \llbracket0;f'-1\rrbracket$, $\Delta(\sigma'_j,F'_j) \geq \beta d_B$.
    \item \label{it:valid exec5} For all $F \in \cF$, $\Delta(\sigma'_{f'},F) < \beta d_B$.
    \end{enumerate}

    In the adjacency graph $\cG$, two qubits in the support of the same generator are connected. As a consequence, since the set $F_i$ is included in the support of a generator and since $K$ is connected, either $F_i \subseteq K$ or $F_i \cap K = \varnothing$.
    \\Consider $i_0 < \ldots < i_{f'-1} \in \llbracket 0;f-1 \rrbracket$ the steps of the algorithm such that $F_{i_k} \subseteq K$ and let's prove that the sets $F'_0 = F_{i_0}, \ldots, F'_{f'} = F_{i_{f'-1}}$ are appropriate.
    By definition we have:
    \begin{align}
      \label{eq:disj Fi1} & F_i \cap K = F_i & \text{if $i \in \{i_0, \ldots, i_{f'-1}\}$}\\
      \label{eq:disj Fi2} & F_i \cap K = \varnothing & \text{if $i \notin \{i_0, \ldots, i_{f'-1}\}$}
    \end{align}
    Using \cref{eq:disj Fi1,eq:disj Fi2}, we can easily prove \cref{it:valid exec1,it:valid exec2}:
    \begin{enumerate}[label=(\roman*)]
    \item [\ref{it:valid exec1}]
      $\displaystyle
      \hat{E} \cap K
      = \left(\bigoplus_{i = 0}^{f-1} F_i\right) \cap K
      = \bigoplus_{i = 0}^{f-1} (F_i \cap K)
      = \bigoplus_{j = 0}^{f'-1} F'_j.
      $
    \item [\ref{it:valid exec2}] $\displaystyle \bigcup_{j = 0}^{f'-1} F'_j = \bigcup_{i = 0}^{f-1} (F_i \cap K)$ and using the distributivity:
      \\$\displaystyle K = U \cap K = (E \cap K) \cup F'_0 \cup \ldots \cup F'_{f'-1}$.
    \end{enumerate}
    By hypothesis, we further have a valid execution of \Cref{algo decodage qec} with input $\sigma_X(E)$, output $\hat{E}$ and support $U$. This implies:
    \begin{enumerate}[label=(\roman*')]\setcounter{enumi}{2}
    \item \label{it:valid exec3'} For $i \in \llbracket0;f-1\rrbracket$, $F_i$ maximises $F \mapsto \Delta(\sigma_i,F)$ over $\cF$.
    \item \label{it:valid exec4'} For $i \in \llbracket0;f-1\rrbracket$, $\Delta(\sigma_i,F_i) \geq \beta d_B$.
    \item \label{it:valid exec5'} For all $F \in \cF$, $\Delta(\sigma_{f},F) < \beta d_B$.
    \end{enumerate}
    We also need the property of \cref{eq:synd equal} where we have defined $i_{f'} = f$:
    \begin{align}\label{eq:synd equal}
      & \sigma'_{j} = \sigma_{i_{j}} \cap D_K \quad j \in \llbracket 0; f'\rrbracket.
    \end{align}
    Note that \cref{it:valid exec3} (resp. \cref{it:valid exec4}, resp. \cref{it:valid exec5}) is a direct consequence of \Cref{lemme quantique ok}, \cref{eq:synd equal} and \cref{it:valid exec3'} (resp. \cref{it:valid exec4'}, resp. \cref{it:valid exec5'}).
    \\To complete the proof, it remains to prove \cref{eq:synd equal}. By \Cref{lem:synd local}:
    \begin{align*}
      & \sigma_0 \cap D_K = \sigma_X(E) \cap D_K = \sigma_X(E \cap K),\\
      & \sigma_{i+1} \cap D_K = \left(\sigma_i \oplus \sigma_X(F_i)\right) \cap D_K = (\sigma_i \cap D_K) \oplus \sigma_X(F_i \cap K)
      \qquad i \in \llbracket0;f-1\rrbracket.
    \end{align*}
    If $i \in \{i_0, \ldots, i_{f'-1}\}$ then by \cref{eq:disj Fi1}: $\sigma_{i+1} \cap D_K = (\sigma_i \cap D_K) \oplus \sigma_X(F_i)$.
    \\If $i \notin \{i_0, \ldots, i_{f'-1}\}$ then by \cref{eq:disj Fi2}: $\sigma_{i+1} \cap D_K = \sigma_i \cap D_K$.
    \\In summary:
    $$\left\{
    \begin{aligned}
      & \sigma'_0 = \sigma_X(E \cap K)\\
      & \sigma'_{j+1} = \sigma'_j \oplus \sigma_X(F_{i_j}) \quad j \in \llbracket 0; f'-1\rrbracket
    \end{aligned}
    \right.
    \quad
    \left\{
    \begin{aligned}
      & \sigma_0 \cap D_K = \sigma_X(E \cap K)\\
      & \sigma_{i+1} \cap D_K = (\sigma_i \cap D_K) \oplus \sigma_X(F_i) \quad & i \in \{i_0, \ldots, i_{f'-1}\}\\
      & \sigma_{i+1} \cap D_K = \sigma_i \cap D_K \quad & i \notin \{i_0, \ldots, i_{f'-1}\}      
    \end{aligned}
    \right.$$
    \Cref{eq:synd equal} is a direct corollary of the following property that we can prove by induction:
    \begin{align*}
      & \sigma'_{0} = \sigma_{0} \cap D_K = \ldots = \sigma_{i_{0}} \cap D_K,\\
      & \sigma'_{j+1} = \sigma_{i_j + 1} \cap D_K = \ldots = \sigma_{i_{j+1}} \cap D_K \quad j \in \llbracket 0; f'-1\rrbracket.
    \end{align*}
  \end{proof}

  \appendix


  \section{Numerical bounds}

  In this section, we look at the numerical values of the bounds on the threshold that we derived in the main text. We obtain a threshold $p_{\mathrm{ls}} = 2.70.10^{-16}$ for the small-set-flip decoding algorithm of quantum expander codes with errors in the locally stochastic noise model.
  \\By \Cref{thm:exist}, there exists a family of $(\gamma_A, \delta_A, \gamma_B, \delta_B)$-expanders for $\gamma_A, \gamma_B = \Omega(1)$ with left degree $d_A = 38$, right degree $d_B = 39$, $\delta_A = 1/d_A + \epsilon$ and $\delta_B = 1/d_B + \epsilon$ for $\epsilon$ arbitrarily small. From the construction of \Cref{subs:qec} we get a family of quantum expander codes. The degree of the associated adjacency graph is $d = d_B^2 + 2 d_B (d_A - 1) = 4407$ and by \Cref{def:correction param,thm:beta,lemme correction} and \Cref{thm:percolation}, the small-set-flip algorithm with parameter $\beta = \frac{d_A}{2 d_B} \left[1 - 4(\delta_A + \delta_B + (\delta_B - \delta_A)^2) \right] \approx 0.386$ corrects random errors in the locally stochastic noise model of probability $p < p_{\mathrm{ls}}$ with high probability where
  \begin{align*}
    && \alpha = \frac{\beta}{1 + \beta} \approx 0.278,
    && p_{\mathrm{ls}} = \left(\frac{2^{-h(\alpha)}}{(d-1)(1+\frac{1}{d-2})^{d-2}}\right)^{\frac{1}{\alpha}} \approx 2.70 \cdot 10^{-16}.
  \end{align*}
  Note that the threshold $p_{\mathrm{iid}}$ in the independent error model is essentially the same: $p_{\mathrm{iid}} - p_{\mathrm{ls}} \approx 1 \cdot 10^{-27}$.
  \\From a practical point of view, this theoretical value of $2.70 \cdot 10^{-16}$ is too small to be relevant. We believe, however, that in practice the threshold will be much better. Take as an example the toric code: our analysis leads to a threshold of $8.1 \cdot 10^{-4}$ in the independent error model with ideal syndrome extraction, whereas it is known that the actual threshold is around $10 \%$ (see e.g.~Ref.~\cite{wang2009threshold}). Notice furthermore that the graph expansion arguments are generally pessimistic since they are based on averaging (see for example the discussion of Ref.~\cite{richardson2008modern} example 8.14, about the correction capacity of the bit-flip algorithm for classical expander codes). In addition the threshold for $\alpha$-percolation given in \Cref{thm:percolation} is valid for any graph $\cG$ of degree bounded by $d$. In particular, the case $\alpha = 1$ shows that the threshold is the smallest when the graph $\cG$ is randomly sampled \cite{janson2009percolation}. The point is that the number of connected sets in that kind of graphs is close to the upper bound of \Cref{lemme nb cc}. In contrast, the adjacency graphs $\cG$ we considered are not random and seem to contain far less connected sets than the upper bound that we relied on. As a consequence, if one were able to improve the bound of \Cref{lemme nb cc}, it would automatically lead to a better threshold as can be seen in the proof of \Cref{thm:percolation}. One last argument is that one can improve the analysis of the ``propagation'' of errors: suppose that during the bit flip algorithm, a bit $v$ with no error is flipped, then it would mean that at least half of the neighbours of $v$ in the graph $\cG$ are in error. This shows that not all the connected sets allow the error to propagate, and thus the bound of \Cref{lemme nb cc} would be improved by only counting this kind of connected sets. Note also the similarity with the setup of ``bootstrap percolation'' (see Ref.~\cite{adler1991bootstrap}). We believe that some arguments from this field can be used to improve our threshold.


  \section{Proofs}


  \subsection{Proof of \Cref{thm:beta}}\label{section proof theo capacite correction algo QEC}

  We follow the proof of \cite{leverrier2015quantum}. An element $g \in C_Z$ is called a generator. An error $E \subseteq V$ is called reduced if $|E|$ is minimal among the errors of $E + {\cC_Z}^{\bot}$.
  Following Ref.~\cite{leverrier2015quantum}, we introduce a notion of \emph{critical} generator which corresponds to a generator $g$ with $\Gamma_Z(g)$ containing a set of qubits that can be flipped to decrease the syndrome weight. More precisely, a critical generator as defined below can always be used to decrease the syndrome weight, but there might exist noncritical generators that also contain a subset of qubits that lead to a decrease of the weight of the syndrome, when flipped. 

  \begin{definition}[Definition 6 of \cite{leverrier2015quantum}]
    \label{def:crit}
    A generator $g \in C_Z$ is said to be \emph{critical} for $E \subseteq V$ if:
    \begin{align*}
      \Gamma_Z(g) = x_a \uplus \overline{x}_a \uplus \chi_a \uplus x_b \uplus \overline{x}_b \uplus \chi_b
    \end{align*}
    where
    \begin{itemize}
    \item $\Gamma_Z(g) \cap A^2 = x_a \uplus \overline{x}_a \uplus \chi_a$ and $\Gamma_Z(g) \cap B^2 = x_b \uplus \overline{x}_b \uplus \chi_b$;
    \item $x_a, x_b \subseteq E$ and $\overline{x}_a, \overline{x}_b \subseteq V \setminus E$;
    \item for all $v_a \in x_a, v_b \in x_b, \overline{v}_a \in \overline{x}_a$ and $\overline{v}_b \in \overline{x}_b$:
      \begin{itemize}
      \item $E \cap \Gamma_X[\Gamma_X(v_a) \cap \Gamma_X(v_b)] = \{v_a, v_b\}$
      \item $E \cap \Gamma_X[\Gamma_X(\overline{v}_a) \cap \Gamma_X(\overline{v}_b)] = \varnothing$
      \item $E \cap \Gamma_X[\Gamma_X(v_a) \cap \Gamma_X(\overline{v}_b)] = \{v_a\}$
      \item $E \cap \Gamma_X[\Gamma_X(\overline{v}_a) \cap \Gamma_X(v_b)] = \{v_b\}$
      \end{itemize}
    \item $x_a \cup x_b \neq \varnothing, |\chi_a| \leq 2 \delta_B d_B$ and $|\chi_b| \leq 2 \delta_A d_A$.
    \end{itemize}
  \end{definition}

  \begin{lemme}[Lemma 7 of \cite{leverrier2015quantum}]\label{existance of a critical generator}
    For an error $E \subseteq V$ such that $0 < |E| \leq \min(\gamma_A n_A, \gamma_B n_B)$, there exists a critical generator for $E$.
  \end{lemme}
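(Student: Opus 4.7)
My approach follows the LTZ blueprint: to locate a $Z$-type generator $g = (\alpha,\beta) \in A \times B$ whose support $\Gamma_Z(g)$ exhibits the clean structure required by \Cref{def:crit}, I would apply left-expansion of the classical expander graph $G$ on both coordinates of the hypergraph product.

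The first step is to reduce to the case where $E$ is a \emph{reduced} error in its equivalence class modulo $\cC_Z^\perp$; this preserves the hypothesis $|E| \leq \min(\gamma_A n_A, \gamma_B n_B)$ and means no element of $\cC_Z^\perp$ can be added to $E$ without increasing its weight. Decompose $E = E_A \uplus E_B$ with $E_A = E \cap A^2$ and $E_B = E \cap B^2$. For each $\alpha \in A$, let $E_A(\alpha) = \{a : (\alpha, a) \in E_A\} \subseteq A$ be the $\alpha$-row, and for each $\beta \in B$ let $E_B(\beta) = \{b : (b,\beta) \in E_B\} \subseteq B$ be the $\beta$-column. A $Z$-type generator $g = (\alpha,\beta)$ has support that splits as $\Gamma_Z(g)\cap A^2 = \{\alpha\} \times \Gamma(\beta)$ (size $d_B$) and $\Gamma_Z(g)\cap B^2 = \Gamma(\alpha) \times \{\beta\}$ (size $d_A$), so the ambiguity set $\chi_a$ is controlled by $\Gamma(\beta) \cap E_A(\alpha)$ and $\chi_b$ by $\Gamma(\alpha) \cap E_B(\beta)$.

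The heart of the argument is an averaging step driven by expansion. Applying $(\gamma_A,\delta_A)$- and $(\gamma_B,\delta_B)$-left-expansion of $G$ to each $E_A(\alpha)$ and each $E_B(\beta)$ (valid whenever their sizes are below $\gamma_B n_B$ and $\gamma_A n_A$ respectively), the total count of ambiguous incidences is small. Since $|E| > 0$, the error covers at least one qubit and hence at least $d_A$ (or $d_B$) generators; a pigeonhole across all generators touching $E$ then produces a pair $(\alpha,\beta)$ with $x_a \cup x_b \neq \varnothing$ meeting both ambiguity bounds $|\chi_a| \leq 2\delta_B d_B$ and $|\chi_b| \leq 2\delta_A d_A$ simultaneously.

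The main obstacle is enforcing the four isolation conditions on $E \cap \Gamma_X[\Gamma_X(v_a)\cap\Gamma_X(v_b)]$. For $v_a = (\alpha_0, a_0) \in A^2$ and $v_b = (b_0,\beta_0) \in B^2$ both in $\Gamma_Z(g)$, a direct unpacking shows that $\Gamma_X(v_a)\cap\Gamma_X(v_b)$ reduces to a single neighbouring $X$-type generator $g' = (\alpha_0,\beta_0) \in C_X$ (provided the relevant incidences in $G$ hold), so each isolation condition asserts that the support of $g'$ meets $E$ cleanly in exactly the prescribed subset of $\{v_a, v_b\}$. My plan is to bootstrap the expansion argument: any pair $(v_a, v_b)$ that would violate an isolation condition forces an additional error qubit inside $\Gamma_X(g')$, and these violations can be absorbed by moving the offending $v_a$ or $v_b$ out of the clean sets $x_a \cup \overline{x}_a$ or $x_b \cup \overline{x}_b$ and into $\chi_a$ or $\chi_b$. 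The delicate point is to show that this re-absorption does not blow the $\chi$ sets past $2\delta_B d_B$ and $2\delta_A d_A$, which requires a two-level expansion budget (one for the original $\chi$-bounds, one for the violating pairs) and is exactly where the assumptions $\delta_A, \delta_B < 1/6$ must enter. I expect this bookkeeping to be the main obstacle.
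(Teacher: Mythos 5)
First, note that the paper you are completing does not prove this lemma at all: it imports it verbatim as Lemma~7 of \cite{leverrier2015quantum}, so the relevant comparison is with the proof there. Against that benchmark, your opening move is already invalid: you cannot ``reduce to the case where $E$ is reduced''. Criticality in \Cref{def:crit} is a property of the set $E$ itself --- $x_a, x_b \subseteq E$, $\overline{x}_a, \overline{x}_b \subseteq V \setminus E$, and all four isolation conditions refer to $E$ --- and it is not invariant under translating $E$ by an element of $\cC_Z^\perp$. A critical generator for a Hamming-reduced representative $E_R$ is in general not a critical generator for $E$, while the statement quantifies over arbitrary $E$ with $0 < |E| \leq \min(\gamma_A n_A, \gamma_B n_B)$. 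This is not a pedantic point in the present paper: in the proof of \Cref{existance of a critical generator and error} the lemma is applied to the coset representative minimizing the \emph{weighted} cardinality $\|\cdot\|$, which need not be Hamming-reduced, so even the intended application would not be covered by a ``reduced errors only'' version.

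Second, everything that makes the lemma nontrivial is left as an acknowledged obstacle rather than proved. You assert that a pigeonhole over the generators touching $E$ produces a single $g$ with $x_a \cup x_b \neq \varnothing$ and \emph{simultaneously} $|\chi_a| \leq 2\delta_B d_B$ and $|\chi_b| \leq 2\delta_A d_A$, but no counting argument is given; and for the four isolation conditions your plan is to ``absorb'' each violating pair into $\chi_a$ or $\chi_b$ under a ``two-level expansion budget'', which is precisely the quantitative heart of the LTZ proof and is exactly the step you flag as unresolved. Note also that this absorption must be shown not to empty $x_a \cup x_b$, which your sketch does not address. (A minor but real slip: in the paper's conventions the generator of \Cref{def:crit} is $g \in C_Z$, indexed by $B \times A$, with $\Gamma_Z(g) = (\Gamma(b) \times \{a\}) \cup (\{b\} \times \Gamma(a))$, and the common neighbour $\Gamma_X(v_a) \cap \Gamma_X(v_b)$ is a single check of $C_X \cong A \times B$, i.e.\ a $Z$-type generator; your indexing and the $X$/$Z$ labels are swapped, which is structurally harmless by symmetry but should be fixed.) As it stands, the proposal is a reasonable outline of the LTZ strategy, not a proof of the lemma.
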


  \begin{lemme}[Modified version of Lemma 8 of \cite{leverrier2015quantum}]\label{existance of a critical generator and error}
    Let $r$ and $\beta = \beta_0$ be defined as in Definition \ref{def:correction param}. 
    Let $E \subseteq V$ be a reduced error such that $0 < |E| \leq r \min(\gamma_A n_A, \gamma_B n_B)$ then there exists an error $F \in \cF$ with
    \begin{align*}
      |\sigma_{X}(E)|-|\sigma_{X}(E \oplus F)| \geq \beta d_B |F|.
    \end{align*}
  \end{lemme}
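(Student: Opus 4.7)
The plan is to adapt the argument of Lemma~8 in~\cite{leverrier2015quantum} to obtain a quantitative syndrome decrease of at least $\beta d_B |F|$, rather than only strict positivity.

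First I would invoke \Cref{existance of a critical generator}: since $r = d_A/d_B \leq 1$ (one may assume $d_A \leq d_B$, else swap the roles of $A$ and $B$), the hypothesis $|E| \leq r \min(\gamma_A n_A, \gamma_B n_B)$ implies $|E| \leq \min(\gamma_A n_A, \gamma_B n_B)$, so there is a critical generator $g = (\alpha,\beta) \in C_Z$ for $E$ with decomposition $\Gamma_Z(g) = x_a \uplus \overline{x}_a \uplus \chi_a \uplus x_b \uplus \overline{x}_b \uplus \chi_b$, the sets $\chi_a, \chi_b$ bounded by $|\chi_a| \leq 2\delta_B d_B$, $|\chi_b| \leq 2\delta_A d_A$, and $x_a \cup x_b \neq \varnothing$. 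I then take $F := x_a \cup x_b$, which lies in $\cF$ because $F \subseteq \Gamma_Z(g) \in \cX$.

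Next, writing $x_a = \{\alpha\} \times A_x$ and $x_b = B_x \times \{\beta\}$ with $u := |A_x|$ and $v := |B_x|$, the product structure of the factor graph $G_X$ of the hypergraph product code gives $|\sigma_X(F)| = d_A u + d_B v - 2uv$. The four intersection conditions in \Cref{def:crit}, of the form $E \cap \Gamma_X[\Gamma_X(v_a) \cap \Gamma_X(v_b)]$ equal to $\{v_a,v_b\}, \{v_a\}, \{v_b\}, \varnothing$ respectively, determine exactly which checks of $\sigma_X(F)$ lie in $\sigma_X(E)$, except for those adjacent to $\chi_a \cup \chi_b$. After a short case analysis and bounding the $\chi$-contribution pessimistically via $|\overline{x}_a| = d_B - u - |\chi_a|$ and $|\overline{x}_b| = d_A - v - |\chi_b|$, one obtains
\[
|\sigma_X(E)| - |\sigma_X(E \oplus F)| \;=\; 2|\sigma_X(E) \cap \sigma_X(F)| - |\sigma_X(F)| \;\geq\; d_A u(1-4\delta_A) + d_B v(1-4\delta_B) - 2uv.
\]

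The main obstacle is to show that this lower bound is at least $\beta_0 d_B (u+v)$. The inequality fails at the extreme corner $(u,v) = (d_B, d_A)$, so one must exploit the hypothesis that $E$ is reduced: applied to $E \oplus \Gamma_Z(g) \in E + \cC_Z^\perp$, it yields the crucial additional constraint $u + v \leq (d_A + d_B)/2$. On this restricted domain the remaining inequality becomes a two-variable quadratic feasibility problem; its worst case lies on the boundary segment $u + v = (d_A + d_B)/2$, where the objective reduces to a one-dimensional convex quadratic in the split parameter. Completing the square at its minimiser produces precisely the correction $(\delta_B - \delta_A)^2$ appearing in $\beta_0 = \tfrac{r}{2}[1 - 4(\delta_A + \delta_B + (\delta_B - \delta_A)^2)]$, and verifies that this value of $\beta_0$ makes the quadratic nonnegative throughout the constrained region. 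Once this calculation is carried out, the existence of $F \in \cF$ with the required syndrome decrease follows immediately.
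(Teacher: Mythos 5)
There is a genuine gap at the step where you derive the constraint that substitutes for the paper's normalization. You apply \Cref{existance of a critical generator} to $E$ itself and use Hamming-weight reducedness of $E$ (compared against $E\oplus\Gamma_Z(g)$) to get $|x_a|+|x_b|\le (d_A+d_B)/2$. The paper's proof instead introduces the skewed norm $\|E\| = |E\cap A^2|/d_B + |E\cap B^2|/d_A$ and applies the critical-generator lemma to the coset element $E_R\in E+\cC_Z^\perp$ minimizing $\|\cdot\|$; it is the $\|\cdot\|$-minimality of $E_R$ that gives the normalized constraint $|x_a|/d_B+|x_b|/d_A\le 1$, and it is under this constraint (not yours) that the two-variable optimization yields exactly $\beta_0/r=\tfrac12\bigl[1-4(\delta_A+\delta_B+(\delta_B-\delta_A)^2)\bigr]$. (This detour is also why the hypothesis carries the factor $r$, since $|E_R|\le (d_B/d_A)|E|$.) When $d_A\neq d_B$ your constraint is genuinely different, and the inequality you need can fail inside your feasible region: take $d_B=5d_A$, $\delta_A\approx 0$, $\delta_B=1/8$ (admissible, with $\beta_0>0$), and $|x_a|=2d_A=0.4\,d_B$, $|x_b|=d_A$. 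Then $|x_a|+|x_b|=3d_A=(d_A+d_B)/2$ is allowed by your constraint, yet your own lower bound gives $d_A|x_a|(1-4\delta_A)+d_B|x_b|(1-4\delta_B)-2|x_a||x_b| = 0.5\,d_A^2$, which falls short of the target $\beta_0 d_B(|x_a|+|x_b|)=0.65625\,d_A^2$. Your completing-the-square on the segment $|x_a|+|x_b|=(d_A+d_B)/2$ reproduces the $(\delta_B-\delta_A)^2$ correction only in the special case $d_A=d_B$, where your constraint and the paper's coincide; for smaller $r=d_A/d_B$ the relevant minimum moves to the corner $|x_b|=d_A$ and becomes negative, as the example shows. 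So the passage to the $\|\cdot\|$-minimizer $E_R$ (with $\sigma_X(E_R)=\sigma_X(E)$, so the syndrome decrease transfers back to $E$) is not an optional refinement but the key missing idea.

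A secondary issue: the opening reduction ``assume $d_A\le d_B$, else swap the roles of $A$ and $B$'' is not a valid WLOG, because $\beta_0=\tfrac{r}{2}[\cdots]$ is not symmetric under $A\leftrightarrow B$: the target $\beta_0 d_B|F| = \tfrac{d_A}{2}[\cdots]\,|F|$ changes under the swap, so proving the swapped statement does not prove the stated one. The paper needs no such assumption: after normalizing, the function analysis is carried out on $\{x+y\le 1\}$ and the final step $|\sigma_X(E)|-|\sigma_X(E\oplus F)|\ge d_Ad_B\,f(x,y)\,\|F\|\ge\beta_0 d_B|F|$ only uses $|F|\le d_B\|F\|$, which holds for any ratio $d_A/d_B$.
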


  \begin{proof}

     For convenience, we use a notion of reduced cardinality. Given an error $E \subseteq V = A^2 \uplus B^2$, $\|E\|$ is defined by:
  \begin{align*}
    \|E\| = \frac{|E \cap A^2|}{d_B} + \frac{|E \cap B^2|}{d_A}.
  \end{align*}
  It is straightforward to check that for $E \subseteq V$:
  \begin{align}\label{eq:norm prop}
    d_A \|E\| \leq |E| \leq d_B \|E\|.
  \end{align}
  Let $E_R$ be the error of $E + {\cC_Z}^{\bot}$ which minimizes $\|E_R\|$. The size of $E_R$ satisfies:
  \begin{align*}
    0 <
    |E_R|
    \leq d_B \|E_R\|
    \leq d_B \|E\|
    \leq \frac{d_B}{d_A} |E|
    \leq \min(\gamma_A n_A, \gamma_B n_B).
  \end{align*}
  We exploit \Cref{existance of a critical generator} applied to $E_R$ and prove that if we flip $F = x_A \uplus x_B$ in the generator promised by the lemma then $|\sigma_{X}(E)|-|\sigma_{X}(E \oplus F)| \geq \beta d_B |F|$.

  We use the notations of the proof of Lemma 8 of \cite{leverrier2015quantum}:
    \begin{align*}
      x = \|x_a\| = |x_a|/d_B && z = \|\chi_a\| = |\chi_a|/d_B && \overline{x} = 1-x-z = \|\overline{x}_a\| = |\overline{x}_a|/d_B\\
      y = \|x_b\| = |x_b|/d_A && t = \|\chi_b\| = |\chi_b|/d_A && \overline{y} = 1-y-t = \|\overline{x}_b\| = |\overline{x}_b|/d_A
    \end{align*}
    For $F = x_A \uplus x_B$, we have $\|F\| = x + y$ and:
    \begin{align*}
      |\sigma_{X}(E)|-|\sigma_{X}(E \oplus F)|
      & \geq \left[ |x_a| |\overline{x}_b| + |\overline{x}_a| |x_b| - |x_a| |\chi_b| - |x_b| |\chi_a|\right]\\
      & = d_A d_B \left[ x \overline{y} + \overline{x} y - x t - y z\right]\\
      & = d_A d_B \left[ x (1-y-t) + (1-x-z) y - x t - y z\right]\\
      & = d_A d_B \left[ x + y - 2xy - 2 x t - 2 y z\right]\\
      & \geq d_A d_B \left[ x + y - 2xy - 4 x \delta_A - 4 y \delta_B\right]\\
      & = d_A d_B \left[ x(1-4\delta_A) + y(1-4\delta_B) - 2xy\right]
    \end{align*}
    Moreover $0 < x + y$ because $x_A \uplus x_B \neq \varnothing$ and $x + y \leq 1$ because $E_R$ minimizes $\|E_R\|$ over $E + {\cC_Z}^{\bot}$.
    \\Let $\displaystyle f(x,y) = \frac{1}{x + y} \left[ x(1-4\delta_A) + y(1-4\delta_B) - 2xy\right]$ defined for $x,y$ such that:
    \begin{align*}
      0 \leq x \leq 1, && 0 \leq y \leq 1, && 0 < x + y  \leq 1.
    \end{align*}
    The function analysis below shows that $f(x,y) \geq \beta/r$ and thus \Cref{existance of a critical generator} holds using \cref{eq:norm prop}:
    \begin{align*}
      |\sigma_{X}(E)|-|\sigma_{X}(E \oplus F)|
      \geq d_A d_B f(x,y) \|F\|
      \geq \beta d_B |F|.
    \end{align*}
    The remaining point is to show that $f(x,y) \geq \beta/r = \frac{1}{2} \left[1 - 4(\delta_A + \delta_B + (\delta_B - \delta_A)^2) \right]$:
    \\The partial derivative of $f$ with respect to $x$ is equal to:
    \begin{align*}
      \frac{\partial f}{\partial x}(x,y) = \frac{y}{(x+y)^2} \left[ 4(\delta_B - \delta_A) - 2y\right].
    \end{align*}
    \begin{itemize}
    \item If $y \leq 2(\delta_B - \delta_A)$ then $x \mapsto f(x,y)$ is non-decreasing and thus:
      \begin{align*}
        f(x,y) \geq f(0,y) = 1 - 4 \delta_B \geq \beta/r.
      \end{align*}
    \item Otherwise, $x \mapsto f(x,y)$ is non-increasing and thus:
      \begin{align*}
        f(x,y) \geq f(1-y,y) = 2y^2 + y(4 \delta_A - 4 \delta_B - 2) + 1 - 4 \delta_A.
      \end{align*}
      $f(x,y)$ is then lower bounded by $\beta/r$ the minimum of $y \mapsto f(1-y,y)$ on $[0,1]$ reached for $y = 1/2 + \delta_B - \delta_A$.
    \end{itemize}
  \end{proof}

  \begin{proof}[Proof of \Cref{thm:beta}]
    Let $E_0 \subseteq V$ with $|E_0| \leq \displaystyle \frac{r \beta}{1+\beta} \min(\gamma_A n_A, \gamma_B n_B)$ and run \Cref{algo decodage qec} with parameter $\beta$ on $E_0$. Set $E_i = E_0 \oplus \hat{E}_{i} = E_0 \oplus \left(\bigoplus_{k = 0}^{i-1} F_i\right)$ for $i \in \llbracket 1; f \rrbracket$, where $f$ is the number of steps of the algorithm (the algorithm terminates because the syndrome weight decreases strictly at each step and must remain non negative). Since $|\sigma_X(E_i)|-|\sigma_{X}(E_{i+1})| \geq \beta d_B |F_i|$, we have:
    \begin{align*}
      \beta d_B \sum_{i=0}^{f-1} |F_i|
      \leq \sum_{i=0}^{f-1} |\sigma_X(E_i)|-|\sigma_{X}(E_{i+1})|
      = |\sigma_X(E_0)|-|\sigma_{X}(E_{f})|
      \leq d_B |E_0|,
    \end{align*}
    and:
    \begin{align*}
      |E_f| = \left|E_0 \oplus \left(\bigoplus_{i = 0}^{f-1} F_i\right)\right| \leq |E_0| + \sum_{i = 0}^{f-1} |F_i| \leq \frac{1 + \beta}{\beta} |E_0| \leq r \min(\gamma_A n_A, \gamma_B n_B).
    \end{align*}
    Let $E_R$ be the reduced error of $E_f + {\cC_Z}^{\bot}$ then $|E_R| \leq |E_f| \leq r \min(\gamma_A n_A, \gamma_B n_B)$ and \Cref{algo decodage qec} stops on $s_f = \sigma_X(E_f) = \sigma_X(E_R)$. We conclude by \Cref{existance of a critical generator and error} that $|E_R| = 0$ and therefore the error is corrected.
  \end{proof}

  \newpage

  \subsection{Results for percolation}\label{section percolation}

  We start by stating useful bounds on binomial coefficients.
  \begin{lemme}
    \label{lem:boundbinomial}
    For $k, n$ integers, we have:
    \begin{align*}
      \binom{n}{k} \leq 2^{n h(k/n)},
    \end{align*}
    where $h(x) = - x \log_2(x) - (1-x) \log_2(1-x)$ is the binary entropy function.
  \end{lemme}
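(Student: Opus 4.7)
The plan is to derive the bound from the binomial theorem applied to $(p + (1-p))^n$ with the distinguished choice $p = k/n$. Writing out
\[
1 = (p + (1-p))^n = \sum_{i=0}^{n} \binom{n}{i} p^i (1-p)^{n-i},
\]
every term on the right is non-negative, so in particular the single term with $i = k$ satisfies $\binom{n}{k} p^k (1-p)^{n-k} \leq 1$. Substituting $p = k/n$ and rearranging gives
\[
\binom{n}{k} \leq \left(\tfrac{k}{n}\right)^{-k} \left(1 - \tfrac{k}{n}\right)^{-(n-k)}.
\]

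Taking $\log_2$ on both sides, the right-hand side becomes
\[
2^{-k \log_2(k/n) - (n-k)\log_2(1 - k/n)} = 2^{n\left[-\tfrac{k}{n}\log_2(k/n) - (1-\tfrac{k}{n})\log_2(1-k/n)\right]} = 2^{n h(k/n)},
\]
which is exactly the claimed inequality. The boundary cases $k = 0$ and $k = n$ are handled with the convention $0 \log_2 0 = 0$ (so $h(0) = h(1) = 0$), giving equality $\binom{n}{0} = \binom{n}{n} = 1 = 2^0$.

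This proof is a short one-line calculation and presents no real obstacle; the only subtle point is the convention at the boundary, which is standard.
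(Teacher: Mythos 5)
Your proof is correct: the trick of bounding the single term $\binom{n}{k} p^k(1-p)^{n-k}$ by the full binomial expansion of $(p+(1-p))^n$ at $p=k/n$ is the standard argument for this entropy bound, and your handling of the boundary cases $k=0,n$ via the convention $0\log_2 0=0$ is fine. The paper simply states this lemma as a known fact without proof, so your one-line derivation is exactly the kind of justification it implicitly relies on; nothing is missing.
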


  \begin{lemme}[Chernoff bound for the binomial distribution]
    \label{chernoff}
    Let $p \in [0;1], s \in \bN$ and $k \geq s p$, then we have:
    \begin{align*}
      \sum_{m \geq k} \binom{s}{m} p^m (1-p)^{s-m} \leq 2^{-s D\left(\frac{k}{s} \| p\right)}.
    \end{align*}
  \end{lemme}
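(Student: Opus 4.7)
The plan is to recognize the left-hand side as the upper tail probability $\mathbb{P}[X \geq k]$ for a Binomial$(s,p)$ random variable $X = X_1 + \cdots + X_s$ written as a sum of i.i.d.\ Bernoulli$(p)$ variables, and then to apply the standard exponential Markov (Chernoff) argument. For any $\lambda \geq 0$, Markov's inequality gives
\[
\mathbb{P}[X \geq k] = \mathbb{P}[2^{\lambda X} \geq 2^{\lambda k}] \leq 2^{-\lambda k}\,\mathbb{E}[2^{\lambda X}] = 2^{-\lambda k}\,(1-p+p\cdot 2^{\lambda})^{s},
\]
where the last equality uses independence to factor the moment generating function as a product of $s$ identical single-variable MGFs.

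The next step I would carry out is to optimize the right-hand side over $\lambda \geq 0$. Writing $q = k/s$, differentiating $-\lambda k + s\log_{2}(1-p+p\cdot 2^{\lambda})$ in $\lambda$ and setting the derivative to zero yields the optimal value $2^{\lambda^{\ast}} = \frac{q(1-p)}{p(1-q)}$. The hypothesis $k \geq sp$, i.e.\ $q \geq p$, is what guarantees $\lambda^{\ast} \geq 0$, so that the Chernoff step goes through with the correct inequality direction. A short simplification then gives $1-p+p\cdot 2^{\lambda^{\ast}} = \frac{1-p}{1-q}$, so upon substitution the exponent reduces to
\[
-sq\log_{2}(q/p) - s(1-q)\log_{2}\!\left(\frac{1-q}{1-p}\right) = -s\,D(q\|p),
\]
which is exactly the claimed bound.

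There is no real conceptual obstacle here: this is the textbook Chernoff--Hoeffding upper tail bound for the binomial distribution, and the proof reduces to a one-variable calculus optimization of the MGF. The only point that requires a little care is checking that the hypothesis $k \geq sp$ is precisely the condition under which the optimal $\lambda^{\ast}$ is non-negative, ensuring that the inequality $\{X \geq k\} = \{2^{\lambda X} \geq 2^{\lambda k}\}$ used at the outset is valid and that Markov's inequality applies in the intended direction.
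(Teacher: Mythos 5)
Your proof is correct: it is the standard exponential-moment (Chernoff) argument, with the optimal tilt $2^{\lambda^{\ast}} = \frac{q(1-p)}{p(1-q)}$ correctly identified and the hypothesis $k \geq sp$ correctly used to ensure $\lambda^{\ast} \geq 0$. The paper states this lemma as a known fact without proof, so there is nothing to compare against; your derivation is exactly the textbook route one would expect.
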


  \Cref{lemme nb cc} gives an upper bound on the number of connected sets of size $s$ in a graph with an upper bound on the degree. Note that for a rooted tree of degree bounded by $d$, each vertex which is not the root has at most $d-1$ sons. Using this remark and results from \cite{pah2015combinatorial}, Lemma \ref{lemme nb cc} can be proved following the proof of \cite{uehara1999number}.

  \begin{lemme}[\cite{pah2015combinatorial},\cite{uehara1999number}]\label{lemme nb cc}
    Let $\cG = (V,\cE)$ be a graph of degree bounded by $d \geq 2$. For $s \geq 1$, the number of connected sets of size $s$ satisfies $\displaystyle |\cC_s(\cG)| \leq |V| \frac{d}{s[s(d-2)+2]} \binom{s(d-1)}{s-1}$.
  \end{lemme}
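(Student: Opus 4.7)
The plan is to reduce the enumeration of connected subgraphs of size $s$ in $\cG$ to the enumeration of a class of rooted trees with controlled branching, then invoke a classical Fuss--Catalan-type count.

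First, I would fix once and for all an arbitrary ordering of the neighbors at each vertex of $\cG$, so that every vertex has ``slots'' numbered $1,\dots,\deg(v) \leq d$. Given a connected subgraph $S\subseteq V$ of size $s$ together with a distinguished vertex $v\in S$, run depth-first search from $v$ inside the induced subgraph on $S$, breaking ties by the fixed slot ordering. This produces a rooted spanning tree $T_{S,v}$ of $S$, which I enrich by recording at each node the slot of $\cG$ used to reach each child. The root has at most $d$ children (since it has at most $d$ neighbors in $\cG$), and every non-root vertex has at most $d-1$ children (one slot is consumed by the edge back to its parent---this is exactly the observation highlighted in the hint). The key point is that the pair $(v,T_{S,v})$ determines $S$ uniquely: embedding the slot-labeled tree into $\cG$ starting from $v$ re-traces exactly the vertices of $S$. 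Hence the map $(S,v)\mapsto(v,T_{S,v})$, defined on pairs with $v\in S$, is injective.

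Second, I would count the number $N_s^{(d)}$ of admissible slot-labeled rooted trees with $s$ nodes. This is a Fuss--Catalan-type enumeration: writing $G(x)$ for the generating function of subtrees rooted at a non-root vertex, a root decomposition gives $G(x)=x\,(1+G(x))^{d-1}$, while the full enumeration satisfies $F(x) = x\,(1+G(x))^{d}$. Lagrange inversion (or, equivalently, the cycle lemma as applied in \cite{pah2015combinatorial}) then yields the closed form
\[
  N_s^{(d)} \;=\; \frac{d}{s(d-2)+2}\binom{s(d-1)}{s-1},
\]
the extra factor of $d$ reflecting the asymmetric treatment of the root (allowed $d$ children) versus internal nodes (allowed $d-1$ children).

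To finish, I would combine the two steps by double counting the pairs $(S,v)$ with $v\in S$, $|S|=s$, and $S$ connected. On the one hand this count equals $s\cdot|\cC_s(\cG)|$; on the other, applying the injection of Step 1 for each choice of root $v\in V$ bounds it by $|V|\cdot N_s^{(d)}$. Rearranging,
\[
  |\cC_s(\cG)| \;\leq\; \frac{|V|}{s}\,N_s^{(d)} \;=\; |V|\,\frac{d}{s\bigl[s(d-2)+2\bigr]}\binom{s(d-1)}{s-1},
\]
which is exactly the claimed bound. The main obstacle is the exact Fuss--Catalan identity for $N_s^{(d)}$: a crude bound of the form $N_s^{(d)} = O((e(d-1))^s)$ is routine, but pinning down the precise constant $\frac{d}{s(d-2)+2}$ requires the cycle-lemma / Lagrange-inversion argument together with careful handling of the root/non-root asymmetry. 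This is precisely what is done in \cite{pah2015combinatorial}, and the combinatorial bookkeeping proceeds along the lines of \cite{uehara1999number}.
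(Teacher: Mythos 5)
Your proposal is correct and follows essentially the same route as the paper: encode each connected set containing a distinguished root $v$ injectively as a slot-labelled rooted tree whose root has outdegree at most $d$ and whose other nodes have outdegree at most $d-1$, count such trees by the Raney/Fuss--Catalan formula $\frac{d}{s(d-2)+2}\binom{s(d-1)}{s-1}$ from \cite{pah2015combinatorial}, and divide by $s$ via the double count over pairs $(S,v)$. The only cosmetic difference is that the paper spells out the relabelling of child slots into $\llbracket 1;d-1\rrbracket$ at non-root vertices, a bookkeeping step you subsume under ``handling of the root/non-root asymmetry.''
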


  \begin{proof}
    Let $\cT(s)$ be the set of labelled rooted trees with $s$ vertices such that the maximum outdegree of the root is $d$ and the maximum outdegree of the other nodes is $d-1$. An element of $\cT(s)$ is a directed graph where the root has at most $d$ sons and the other vertices have at most $d-1$ sons. The directed edges whose head is some node $v$ are injectively labeled with labels in $\llbracket 1;d \rrbracket$ if $v$ is the root and with labels in $\llbracket 1;d-1 \rrbracket$ if $v$ is not the root. Note that in the underlying undirected graph, a node which is not the root can have $d$ neighbours. Using Ref.~\cite{pah2015combinatorial}, we have $|\cT(s)| = R(d-1,s-1,d) := \frac{d}{s(d-2)+2} \binom{s(d-1)}{s-1}$ where $R(a,b,c)$ are the Raney numbers.
    \\From the graph $\cG$ we construct the oriented graph $\cG_0$ where each non-oriented edge of $\cG$ has been replaced by two opposite oriented edges. Similarly than for the trees, we fix some labelling of $\cG_0$: the directed edges whose head is some node $v$ are injectively labeled with labels in $\llbracket 1;\deg^+(v) \rrbracket$ where $\deg^+(v)$ is the outdegree of $v$.
    \\Now let $v \in V$ and let $\cC_s(v)$ be the set of connected sets $X \in \cC_s(\cG)$ with $v \in X$. For a given $X \in \cC_s(v)$, there is at least one spanning tree $T_0$ of $X$. We then get a labelled rooted tree $T \in \cT(s)$ by fixing $v$ as the root and fixing the labelling of $T$ using the labelling of $T_0$ in $\cG_0$. Note that the labelling for $T$ is not exactly the same labelling than $T_0$ since the labels in $T$ are in $\llbracket 1;d-1 \rrbracket$ (for edges whose head is not the root) whereas the label $d$ is allowed for $T_0$. The key point is that when we consider a vertex $v_1 \neq v$ in $T_0$ then this vertex has a father $v_0$. Let $i$ be the label in $T_0$ of the oriented edge from $v_1$ to $v_0$ and $j$ be the label in $T_0$ of another edge whose head is $v_1$. The label  in $T$ of this edge is then $j$ if $j < i$ and $j-1$ if $j > i$. Conversely given $T \in \cT(s)$, $T$ can be obtained by this procedure using at most one connected set $X \in \cC_s(v)$. Thus $|\cC_s(v)| \leq |\cT(s)|$.
    \\When we consider the sum $\sum_{v \in V} |\cC_s(v)|$, each connected set of $\cC_s(\cG)$ is counted $s$ times. Therefore,
    \begin{align*}
      |\cC_s(\cG)| = \frac{1}{s} \sum_{v \in V} |\cC_s(v)| \leq |V| \frac{d}{s[s(d-2)+2]} \binom{s(d-1)}{s-1}.
    \end{align*}
    
  \end{proof}

  \begin{corollaire}\label{lemme nb cc approx} 
    Let $\cG = (V,\cE)$ be a graph of degree bounded by $d \geq 2$. For $s \geq 1$, the number of connected sets of size $s$ satisfies $|\cC_s(\cG)| \leq |V| K(d)^s$ where $K(d) = (d-1)\left(1+\frac{1}{d-2}\right)^{d-2}$.
  \end{corollaire}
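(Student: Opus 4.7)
The plan is to derive the corollary directly from Lemma~\ref{lemme nb cc}, which provides the bound
\begin{align*}
|\cC_s(\cG)| \leq |V| \cdot \frac{d}{s[s(d-2)+2]} \binom{s(d-1)}{s-1},
\end{align*}
and then to show that the factor $\frac{d}{s[s(d-2)+2]}\binom{s(d-1)}{s-1}$ is at most $K(d)^s$ via elementary estimates on the binomial coefficient.

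The first step is to apply the standard entropy-type bound $\binom{n}{k} \leq \frac{n^n}{k^k (n-k)^{n-k}}$ (which is essentially the content of \Cref{lem:boundbinomial}) with $n = s(d-1)$, $k = s-1$, and hence $n-k = s(d-2)+1$. This yields
\begin{align*}
\binom{s(d-1)}{s-1} \leq \left(\frac{s(d-1)}{s-1}\right)^{s-1} \left(\frac{s(d-1)}{s(d-2)+1}\right)^{s(d-2)+1}.
\end{align*}
Writing $\frac{s(d-1)}{s-1} = (d-1)\bigl(1+\tfrac{1}{s-1}\bigr)$ gives the first factor at most $e \cdot (d-1)^{s-1}$, and using $\frac{s(d-1)}{s(d-2)+1} = \frac{d-1}{d-2+1/s} \leq \frac{d-1}{d-2} = 1+\tfrac{1}{d-2}$ (valid for $d \geq 3$) gives the second factor at most $\bigl(1+\tfrac{1}{d-2}\bigr)^{s(d-2)+1}$. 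Comparing with $K(d)^s = (d-1)^s \bigl(1+\tfrac{1}{d-2}\bigr)^{s(d-2)}$, the extra exponents collapse and one obtains a clean estimate of the form $\binom{s(d-1)}{s-1} \leq \frac{e}{d-2}\, K(d)^s$.

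The final step is to multiply by the prefactor $\frac{d}{s[s(d-2)+2]}$ and verify that the resulting constant $\frac{ed}{s(d-2)[s(d-2)+2]}$ is at most $1$. The main obstacle is that for the smallest values of $s$ and $d$ (for instance $s=2$, $d=3$, where this constant is $\frac{3e}{8} \approx 1.02$), the bound $(1+\tfrac{1}{s-1})^{s-1} \leq e$ is too loose by a hair and fails to close the argument. To handle this I would either verify these small cases directly from the original form of \Cref{lemme nb cc} (where the binomial coefficient takes small explicit values and the inequality is easily checked by hand), or replace the entropy bound with the sharper Stirling refinement $\binom{n}{k} \leq \frac{1}{\sqrt{2\pi}}\sqrt{\tfrac{n}{k(n-k)}}\,\frac{n^n}{k^k(n-k)^{n-k}}$, whose additional $O(1/\sqrt{s})$ factor comfortably absorbs the remaining constant for all $s \geq 2$. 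The edge case $d = 2$ is trivial: $K(2)$ is interpreted as $\lim_{d \to 2^+} K(d) = 1$, and a graph of maximum degree $2$ is a disjoint union of paths and cycles, whose connected subsets of size $s$ are easily counted.
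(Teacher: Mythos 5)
Your plan is sound and starts from the same place as the paper (Lemma~\ref{lemme nb cc} plus an entropy/Stirling bound on the binomial coefficient), but the execution differs in one technical choice that matters. The paper first shifts the binomial index, using $\binom{s(d-1)}{s-1} = \frac{s}{s(d-2)+1}\binom{s(d-1)}{s}$, so that the bound of Lemma~\ref{lemme nb cc} becomes $|V|\,\frac{d}{[s(d-2)+1][s(d-2)+2]}\binom{s(d-1)}{s}$; the prefactor is then at most $1$ for all $d \geq 2$, $s \geq 1$, and applying Lemma~\ref{lem:boundbinomial} to $\binom{s(d-1)}{s}$ gives exactly $2^{s(d-1)h(1/(d-1))} = K(d)^s$ with no constant slop and no case analysis. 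By contrast, you apply the entropy bound at $k = s-1$, which forces the $\bigl(1+\tfrac{1}{s-1}\bigr)^{s-1} \leq e$ estimate and leaves you a factor $\tfrac{ed}{s(d-2)[s(d-2)+2]}$ to absorb; as you correctly observe, this fails only marginally, and in fact $(s,d) = (2,3)$ is the unique problematic pair (for $s=1$ the binomial is $1$, and for every other pair the factor is below $1$), so your proposed fixes — checking that case directly from Lemma~\ref{lemme nb cc}, or invoking a sharper Stirling refinement — do close the argument. So your proof is correct once that patch is made explicit; what the paper's index-shift buys is that the inequality comes out exact in one line, with the $d=2$ degeneracy also handled implicitly since $h(1/(d-1)) = 0$ there, whereas your route needs the separate limit interpretation of $K(2)$ and the small-case check.
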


  \begin{proof}
    From \Cref{lemme nb cc}, we have:
    \begin{align*}
      |\cC_s(\cG)|
      & \leq |V| \frac{d}{s[s(d-2)+2]} \binom{s(d-1)}{s-1}\\
      & = |V| \frac{d}{[s(d-2)+1][s(d-2)+2]} \binom{s(d-1)}{s}\\
      & \leq |V| \binom{s (d-1)}{s} \qquad \text{since } d \geq 2 \text{ and } s \geq 1\\
      & \leq |V| 2^{s(d-1)h(1/(d-1))} \\
      &= |V| \left((d-1) \left(1+\frac{1}{d-2}\right)^{d-2} \right)^{s}.
    \end{align*}
  \end{proof}

  We now prove the improved bound~\eqref{eq:probiid} on the probability of $E$ to be a connected $\alpha$-subset of size $\geq t$. The calculation is similar to that in~\cite[Theorem 3]{kovalev2013fault} who considered $\frac{1}{2}$-percolation in their proof.
  \begin{proof}[Proof of Eq.~\eqref{eq:probiid}]

    \bigskip
    \noindent For $v \in V, X \subseteq V, E \subseteq V$ and $s \in \bN$, let us define:
    \begin{align*}
      & \cC_s(v) = \{X \in \cC_s(\cG): v \in X\},\\
      & \partial X = \Gamma(X) \setminus X, \text{ the boundary of } X,\\
      & A(E,\alpha,s,v) = \{X \in \cC_s(v): |X \cap E| \geq \alpha |X|, \partial X \cap E = \varnothing\}.
    \end{align*}
    In other words, the sets $A(E,\alpha,s,v)$ are the connected sets $E$ containing $v$ such that $|X \cap E| \geq \alpha |X|$ (and whose boundary does not intersect $E$).
    The reason for the second condition in the definition of $A(E, \alpha, s,v)$ is to capture the idea that the sets $X \in A$ are of maximum size: indeed, increasing the size of a set $X \in A(E, \alpha, s,v)$ would only decrease the fraction of vertices in $X \cap E$. 
    
    By induction on the size $|X|$ of a connected set $X \subseteq V$, it is straightforward to show that $|\partial X| \leq (d - 2) |X| + 2$.

    By definition, we have:
    \begin{align*}
      \mathbb{P}[\textrm{MaxConn}_{\alpha}(E) \geq t]
      & = \mathbb{P}\left[\exists s \geq t, \exists X \in \cC_s(\cG): |X \cap E| \geq \alpha  |X|\right]\\
      & = \mathbb{P}\left[\exists s \geq t, \exists X \in \cC_s(\cG): |X \cap E| \geq \alpha  |X|, \partial X \cap E = \varnothing \right]\\
      & = \mathbb{P}\left[\exists s \geq t, \exists v \in V: A(E,\alpha,s,v) \neq \varnothing\right]\\
      & \leq \sum_{s \geq t} \sum_{v \in V} \mathbb{P}\left[ A(E,\alpha,s,v) \neq \varnothing\right] \ .
    \end{align*}
    Note that we used the fact that if there exists $X \in \cC_s(\cG)$ with $s \geq t$ such that $|X \cap E| \geq \alpha |X|$, then we can construct $X'$ by repeatedly adding vertices in $\partial X \cap E$ until we cannot do so. Thus, $X'$ satisfied $\partial X \cap E = \varnothing$. In addition, we have $|X' \cap E| - |X \cap E| = |X'| - |X|$ is the number of vertices added in this process, which implies that $|X' \cap E| \geq \alpha |X'|$.

    For $v \in V$ and $s \in \bN$, we have:
    \begin{align*}
      & \mathbb{P}\left[ A(E,\alpha,s,v) \neq \varnothing\right] \leq \sum_{X \in \cC_s({v})} \sum_{m \geq \alpha s} \binom{s}{m} p^m (1-p)^{s-m+|\partial X|}.
    \end{align*}
    Note furthermore that for any $p$, we have
    \begin{align}
      \label{eqn:alpha=1}
      \mathbb{P}\left[ A(E,1,s,v) \neq \varnothing\right] = \sum_{X \in \cC_s({v})} p^s (1-p)^{|\partial X|} \leq 1.
    \end{align}


    For $p < \frac{1}{d-1}$, we have:
    \begin{align*}
      &\mathbb{P}\left[ A(E,\alpha,s,v) \neq \varnothing\right] \\
      & \leq \sum_{X \in \cC_s({v})} (1-p)^{|\partial X|} \sum_{m \geq \alpha s} \binom{s}{m} p^m (1-p)^{s-m}\\
      & \leq \sum_{X \in \cC_s({v})} (1-p)^{|\partial X|} 2^{-s D(\alpha\|p)} \text{\qquad by \Cref{chernoff}}\\
      & = \sum_{X \in \cC_s({v})} \left(\frac{1-p}{1- \frac{1}{d-1}}\right)^{|\partial X|} \frac{2^{-s D(\alpha\|p)}}{(d-1)^s} (d-1)^s (1-\frac{1}{d-1})^{|\partial X|}\\
      & \leq \left(\frac{1-p}{1-\frac{1}{d-1}}\right)^{(d - 2)s + 2} 2^{-s D(\alpha \| p)} (d-1)^{s} \sum_{X \in \cC_s({v})} \left(\frac{1}{d-1}\right)^s \left(1-\frac{1}{d-1}\right)^{|\partial X|} \text{\qquad since } p < \frac{1}{d-1} \\
      & \leq \left(\frac{(1-p)(d-1)}{d-2}\right)^{2} \left[\left(1-p\right)^{d - 2} 2^{-D(\alpha \| p)} \frac{(d-1)^{d-1}}{(d-2)^{d-2}}\right]^s \text{\qquad from Eq.}  \eqref{eqn:alpha=1}\\
      & = \left(\frac{d-1}{d-2}\right)^{2} q^s
    \end{align*}
    where we defined $q = (1-p)^{d-1-\alpha} p^{\alpha} 2^{h(\alpha) + (d-1) h(\frac{1}{d-1})}$. The main improvement compared to the local stochastic bound is  the factor $(1-p)^{d-1-\alpha}$.

    Summing over $s \geq t$ and $v \in V$ yields:
    \begin{align*}
      \mathbb{P}[\textrm{MaxConn}_{\alpha}(E) \geq t]
      & \leq \sum_{s \geq t} \sum_{v \in V} \left(\frac{d-1}{d-2}\right)^{2} q^s \\
      & \leq |V| \left(\frac{d-1}{d-2}\right)^{2} \frac{q^t}{1-q} \ .
    \end{align*}

  \end{proof}
  
  \begin{lemme}
    \label{lem:threshold-iid}
    For  $\alpha \in (0,1]$, the function $q : p \mapsto (1-p)^{d-1-\alpha} p^{\alpha} 2^{h(\alpha)}(d-1)(1+\frac{1}{d-2})^{d-2}$ is increasing in the interval $[0, \frac{\alpha}{d-1}]$ and it satisfies $q(0) = 0$ and $q(\frac{\alpha}{d-1}) \geq 1$. Consequently, there exists $p_{\mathrm{iid}}$ in this interval such $q(p_{\mathrm{iid}})=1$.
  \end{lemme}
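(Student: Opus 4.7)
The plan is to analyze $q$ on $[0,\alpha/(d-1)]$ by computing its logarithmic derivative, evaluating the boundary values, and then invoking the intermediate value theorem.

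First I would dispose of $q(0)=0$, which is immediate from the $p^\alpha$ factor since $\alpha>0$. For the monotonicity claim, I would work with $\log q(p) = (d-1-\alpha)\log(1-p) + \alpha \log p + \text{const}$ on $(0,\alpha/(d-1)]$. Differentiating gives
\begin{align*}
  \frac{q'(p)}{q(p)} = \frac{\alpha}{p} - \frac{d-1-\alpha}{1-p},
\end{align*}
and setting this to zero yields the unique critical point $p^\star = \alpha/(d-1)$. On the interior of $[0,p^\star]$ we have $q(p)>0$ and $q'/q>0$ (the $\alpha/p$ term dominates for small $p$, and there is no other zero), so $q$ is strictly increasing on $[0,p^\star]$ as claimed.

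The main work is in showing $q(p^\star)\geq 1$. I would substitute $p^\star=\alpha/(d-1)$, use $1-p^\star=(d-1-\alpha)/(d-1)$, and also use the identity $2^{h(\alpha)} = \alpha^{-\alpha}(1-\alpha)^{-(1-\alpha)}$. After simplification (the factors of $(d-1)^{d-2}$ cancel against $(1+\tfrac{1}{d-2})^{d-2} = (d-1)^{d-2}/(d-2)^{d-2}$, and the $\alpha^\alpha$ factors cancel), one is left with
\begin{align*}
  q(p^\star) = \frac{(d-1-\alpha)^{d-1-\alpha}}{(d-2)^{d-2}\,(1-\alpha)^{1-\alpha}},
\end{align*}
so the claim reduces to $(d-1-\alpha)^{d-1-\alpha} \geq (d-2)^{d-2}(1-\alpha)^{1-\alpha}$ for $\alpha\in(0,1]$.

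The hardest step will be this final inequality; I would handle it by setting
\begin{align*}
  f(\alpha) = (d-1-\alpha)\log(d-1-\alpha) - (1-\alpha)\log(1-\alpha) - (d-2)\log(d-2),
\end{align*}
computing $f'(\alpha) = \log\!\bigl((1-\alpha)/(d-1-\alpha)\bigr) < 0$ for $d\geq 3$ and $\alpha\in(0,1)$, and checking that $f(1)=0$ (with the convention $0^0=1$, i.e.\ $(1-\alpha)\log(1-\alpha)\to 0$). Hence $f(\alpha)\geq 0$ on $(0,1]$, which gives $q(p^\star)\geq 1$. Continuity of $q$ on $[0,p^\star]$ together with $q(0)=0<1\leq q(p^\star)$ then yields the desired $p_{\mathrm{iid}}\in[0,\alpha/(d-1)]$ with $q(p_{\mathrm{iid}})=1$ by the intermediate value theorem, and in fact $p_{\mathrm{iid}}\geq p_{\mathrm{ls}}$ since $q(p_{\mathrm{ls}})\leq 1$ by the definition of $p_{\mathrm{ls}}$ and the fact that $q$ is increasing.
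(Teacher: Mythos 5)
Your proof is correct and follows essentially the same route as the paper: verify $q(0)=0$, monotonicity of $q$ on $[0,\alpha/(d-1)]$, and the simplification of $q(\alpha/(d-1))$ to $(d-1-\alpha)^{d-1-\alpha}/\bigl((d-2)^{d-2}(1-\alpha)^{1-\alpha}\bigr)\geq 1$, followed by the intermediate value theorem. The only difference is that you fill in the details the paper leaves implicit (the logarithmic-derivative computation for monotonicity, which the paper dismisses as ``simple to verify,'' and the decreasing-function argument with $f(1)=0$ for the final inequality, which the paper asserts using only ``$\alpha\in(0,1]$''), and both of these fills are correct.
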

  \begin{proof}
    It is simple to verify that $q$ is increasing. Moreover, we have $q(0) = 0$ and
    \begin{align*}
      q(\frac{\alpha}{d-1}) &= \left(1- \frac{\alpha}{d-1}\right)^{d-1-\alpha} \left(\frac{\alpha}{d-1}\right)^{\alpha} 2^{h(\alpha)} (d-1) \left(1+\frac{1}{d-2}\right)^{d-2} \\
      &= \left(1- \frac{\alpha}{d-1}\right)^{d-1-\alpha} \left(\frac{\alpha}{d-1}\right)^{\alpha} 2^{h(\alpha)}  \frac{(d-1)^{d-1}}{(d-2)^{d-2}}\\
      &= \frac{(d-1-\alpha)^{d-1-\alpha}}{(d-2)^{d-2}} \alpha^{\alpha} \alpha^{-\alpha} (1-\alpha)^{-(1-\alpha)} \\
      &\geq 1 \ ,
    \end{align*}
    where we used the fact that $\alpha \in (0,1]$.
  \end{proof}

  The following proposition gives an upper bound on the value of $p$ for which we have only small connected components. This shows basically that the value in Eq.~\eqref{eq:pls} of Theorem~\ref{thm:percolation} can be at most improved by a factor of $(\frac{2^{-h(\alpha)}}{e})^{\frac{1}{\alpha}}$.

  \begin{proposition}
    \label{prop:ub-percolation}
    Let $d \geq 3$ be an integer, $\alpha \in (0,1]$, $\ell$ be the smallest integer $\ell > \frac{1}{\alpha}$ and $p > \frac{1}{(d-1)^{\ell}}$. Then there exists a constant $\kappa$ (only depending on $d, \alpha, p$) such that for any $c \in \bN^*$, there exists and a family of graphs $G_k$ on $N_k$ vertices (with $N_k \to \infty$) such that if $E$ contains each vertex independently with probability $p$, we have
      \begin{align*}
        \lim_{k \to \infty} \mathbb{P}[\mathrm{MaxConn}_{\alpha}(E) \geq \kappa |N_k|^{1-1/c}] = 1 \ . 
      \end{align*}
  \end{proposition}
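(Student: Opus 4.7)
The plan is to take $G_k$ to be a sequence of random $D$-regular graphs on $N_k$ vertices, where $D := (d-1)^\ell + 1$, with $N_k$ chosen so that $N_k D$ is even so the random regular graph model is well defined. The hypothesis $p > 1/(d-1)^\ell$ then rewrites as $p > 1/(D-1)$, which is precisely the site percolation threshold for both the infinite $D$-regular tree and the random $D$-regular graph model.

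By the standard giant-component result for site percolation on random regular graphs (Theorem~3.5 of~\cite{janson2009percolation}, already invoked in Remark~\ref{rq optimal threshold} above for the case $\alpha = 1$), there exists a constant $\rho = \rho(D,p) > 0$ such that, with probability tending to $1$ as $k \to \infty$, the subgraph of $G_k$ induced by $E$ contains a connected component of size at least $\rho N_k$. Any such component $X$ is a subset of $E$, so $|X \cap E|/|X| = 1 \geq \alpha$, making $X$ automatically a connected $\alpha$-subset of $E$. Taking $\kappa := \rho$, which depends only on $d, \alpha, p$ (through $D$ and $p$) and not on $c$, we obtain $\mathrm{MaxConn}_\alpha(E) \geq \rho N_k \geq \kappa N_k^{1-1/c}$ for every integer $c \geq 1$ with probability tending to $1$, which is the desired conclusion.

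The main obstacle is the invocation of the giant-component theorem for site percolation on random regular graphs: its proof proceeds via a Galton--Watson comparison between the exploration process in the random graph and a branching process whose offspring law is $\mathrm{Bin}(D-1, p)$, which is supercritical precisely when $p > 1/(D-1)$. Once that result is accepted as a black box, the reduction from ``large connected $\alpha$-subset'' to ``large connected $E$-cluster'' is immediate, since subsets of $E$ are trivially $\alpha$-subsets for every $\alpha \in (0,1]$; no finer analysis of the $\alpha$-subset structure is needed for this existence-type statement, and the gap between the obtained threshold $1/(d-1)^\ell$ and the sharper $1/(d-1)^{1/\alpha}$ mentioned in the Remark is precisely the integrality rounding in $\ell = \lceil 1/\alpha \rceil$ or $\lceil 1/\alpha \rceil + 1$.
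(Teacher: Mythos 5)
There is a genuine gap: your graphs do not have degree bounded by $d$. You take $G_k$ to be random $D$-regular graphs with $D=(d-1)^\ell+1$, and since $d\geq 3$ and $\ell\geq 2$ this gives $D\geq (d-1)^2+1>d$. The whole point of \Cref{prop:ub-percolation} is to complement \Cref{thm:percolation}, which concerns graphs of maximum degree $d$, by exhibiting \emph{degree-$d$} graphs on which $\alpha$-percolation already occurs at $p>\frac{1}{(d-1)^{\ell}}$; this is exactly how it is used in \Cref{rq optimal threshold} to bound the possible improvement of \eqref{eq:pls}. If the degree constraint is dropped, the statement becomes essentially vacuous (e.g.\ complete graphs would do), and it no longer says anything about the tightness of the threshold for degree-$d$ graphs. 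So while your argument does produce a family of graphs with large connected $\alpha$-subsets, it is not the family the proposition needs, and the parameter $d$ plays no role as a degree bound in your construction.

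The reduction ``every connected subset of $E$ is an $\alpha$-subset'' also sidesteps the real difficulty. In a graph of degree at most $d$, ordinary ($\alpha=1$) site percolation does not produce large occupied clusters below $p\approx\frac{1}{d-1}$, which is far above the claimed value $\frac{1}{(d-1)^{\ell}}$; so at the relevant $p$ the large connected sets one must exhibit necessarily contain many vertices outside $E$, i.e.\ the slack in the $\alpha$-subset condition must be exploited. The paper does this by taking $G_k$ to be the complete $(d-1)$-ary tree of height $ck$ (degree at most $d$), including the entire top $(c-1)k$ levels in the candidate set $S$ regardless of occupation, and attaching, below each vertex of level $(c-1)k$, a descending path that meets $E$ at least once every $\ell$ levels; the existence of such paths is controlled by a supercritical branching process with offspring law $\mathrm{Bin}((d-1)^{\ell},p)$, supercritical precisely because $p(d-1)^{\ell}>1$. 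Your branching-process comparison is the same in spirit, but you obtain the offspring count $(d-1)^{\ell}$ by inflating the graph degree rather than by looking $\ell$ levels down a bounded-degree tree, which is what breaks the degree requirement. To repair the proof you would need to rework the construction along these lines (or otherwise produce bounded-degree graphs), not merely cite the giant-component theorem for random $D$-regular graphs.
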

  \begin{proof}
    We define $G_k$ to be a complete $(d-1)$-tree of height $c k$. Note that every vertex has degree at most $d$ and the number of vertices is $N_k = \frac{(d-1)^{ck} - 1}{d-2}$. We write $V$ for the set of vertices of the graph $G_k$ and $L_{i}$ for the set of vertices at depth $i$, where the root is at depth $0$.

    For every such vertex $v \in L_{(c-1)k}$, we define the event:
    \begin{align}
      \label{eq:def-ev}
      E_{v} = \left\{ \exists w_{(c-1)k}, \dots, w_{ck-1} : w_{(c-1)k} = v \text{, $w_{i+1}$ is a child of $w_i$ and } |E \cap \{w_{(c-1)k}, \dots, w_{ck-1}\}| \geq \frac{1}{\ell} k \right\}.
    \end{align} 
    We define $P_{v} = \varnothing$ if $E_v$ does not hold and if it does hold, then $P_{v} = \{ w_{(c-1)k}, \dots, w_{ck-1} \}$, where $w_i$ are as in~\eqref{eq:def-ev}.
    We then define the (random) subset $S$ of vertices as
    \begin{align*}
      S = \bigcup_{i=0}^{(c-1)k-1} L_i \cup \bigcup_{v \in L_{(c-1)k}} P_{v} \ .
    \end{align*}
    The claim is then that, provided $k$ is large enough, with high probability, $S$ is an $\alpha$-subset of $E$ and it is of size $\Omega(|N_k|^{1-1/c})$. To see this, observe that
    \begin{align*}
      |E \cap S| &\geq |\{ v \in L_{(c-1)k} : E_{v} \text{ holds}\}| \frac{1}{\ell} k \\
      &= \alpha k |\{ v \in L_{(c-1)k} : E_{v} \text{ holds}\}| + \left(\frac{1}{\ell} - \alpha\right) k |\{ v \in L_{(c-1)k} : E_{v} \text{ holds}\}|
    \end{align*}
    On the other hand
    \begin{align*}
      |S| = k |\{ v \in L_{(c-1)k} : E_{v} \text{ holds}\}| + \frac{(d-1)^{(c-1)k} - 1}{d-2} \ .
    \end{align*}

    Thus, in order to show that $S$ is an $\alpha$-subset of $E$, it suffices to show that for large enough $k$, we have with high probability
    \begin{align*}
      \left(\frac{1}{\ell} - \alpha\right) k |\{ v \in L_{(c-1)k} : E_{v} \text{ holds}\}| \geq \alpha \frac{(d-1)^{(c-1)k} - 1}{d-2} \ .
    \end{align*}
    Now to prove this, it suffices to show that for any $v \in L_{(c-1)k}$, $\mathbb{P}[E_{v}] \geq \tau$ with $\tau > 0$ a constant independent of $k$ and $c$. In fact, once we have that, $|\{ v \in L_{(c-1)k} : E_{v} \text{ holds}\}|$ has a binomial distribution with parameters $((d-1)^{(c-1)k}, \geq \tau)$ and the inequality will follow easily for large enough $k$. Note also the inequality $\mathbb{P}[E_{v}] \geq \tau$ also implies the claimed lower bound on the size of $S$.

    Now to lower bound the probability of $E_{v}$, we look at an even more restricted event. Namely, we consider the event that $w_{(c-1)k} = v \in E$, and there exists a descendent $w_{(c-1)k+\ell} \in L_{(c-1)k+\ell}$ of $w_{(c-1)k}$ that is in $E$, and there is a descendent $w_{(c-1)k+2\ell} \in L_{(c-1)k+2\ell}$ of $w_{(c-1)k+1}$ that is in $E$, and all the way until we reach the bottom of the tree. Note that the probability that such a path exists is lower bounded by the probability of survival of the branching process with offspring distribution binomial with parameters $((d-1)^{\ell}, p)$. But as $p (d-1)^{\ell} > 1$, the probability of survival is $>0$ and we obtain the desired result. 
  \end{proof}

  \section*{Acknowledgments}
  
  We would like to thank Benjamin Audoux, Alain Couvreur, Anirudh Krishna, Vivien Londe, Jean-Pierre Tillich and Gilles Z\'emor for many fruitful discussions on quantum codes as well as Daniel Gottesman for answering our questions about \cite{gottesman2013fault}.
  AG and AL acknowledge support from the ANR through the QuantERA project QCDA.

\end{document}